\newif\ifarxiv
\definecolor{blueblack}{rgb}{0,0,.7}
\newcommand{\emphdef}[1]{%
  \textcolor{blueblack}{%
    \textbf{\emph{#1}}%
  }%
}
\newenvironment{problem}[1]{\smallskip\parindent 0pt{\textcolor{blueblack}{\bf #1}}}{\smallskip} 
\theoremstyle{plain}
\newtheorem{proposition}[theorem]{Proposition}
\newtheorem{theo}{Theorem}[section]
\newtheorem{prop}[theo]{Proposition}
\newcommand{\NN}{\mathbb{N}}
\newcommand{\R}{\mathbb{R}}
\newcommand{\SSS}{\mathbb{S}}
 \newcommand{\complex}{\mathscr{C}}
\newcommand {\surf} {\mathscr{S}}
\newcommand\Input{\textsc{Input}}
\newcommand\Question{\textsc{Question}}
\newcommand\Embed{\textsc{Embed}}
\newcommand\EEP{\textsc{EEP}}
\newcommand\EEPSing{\textsc{EEP-Sing}}
\newcommand\EEPSurf{\textsc{EEP-Surf}}
\newcommand\EEPCon{\textsc{EEP-Conn}}
\newcommand\EEPCell{\textsc{EEP-Cell}}
\newcommand\EmbedResp{\textsc{Embed-Resp}}
\newcommand\GraphGenus{\textsc{Graph-Genus}}
\title{Embedding Graphs into Two-Dimensional Simplicial Complexes}
\author{\'Eric Colin de Verdi\`ere}{Universit\'e Paris-Est, LIGM, CNRS, ENPC, ESIEE Paris, UPEM, Marne-la-Vall\'ee\\{France}}{eric.colindeverdiere@u-pem.fr}{}{}
\author{Thomas Magnard}{Universit\'e Paris-Est, LIGM, CNRS, ENPC, ESIEE Paris, UPEM, Marne-la-Vall\'ee\\{France}}{thomas.magnard@u-pem.fr}{}{}
\author{Bojan Mohar}{Department of Mathematics, Simon Fraser University\\{Burnaby, Canada}}{mohar@sfu.ca}{}{}
\authorrunning{\'E.~Colin de Verdi\`ere, T.~Magnard, and B.~Mohar}
\else\Copyright{\'Eric Colin de Verdi\`ere, Thomas Magnard, and Bojan Mohar}
\def\ra{$\rightarrow$ }
\subjclass{Theory of computation \ra Randomness, geometry and discrete structures \ra Computational geometry; Mathematics of computing \ra  Discrete mathematics \ra  Graph theory \ra  Graph algorithms; 
Mathematics of computing \ra  Discrete mathematics \ra  Graph theory \ra  Graphs and surfaces; Mathematics of computing \ra  Continuous mathematics \ra  Topology \ra Algebraic topology}
\keywords{computational topology, embedding, simplicial complex, graph, surface}
\begin{document}
\maketitle

\begin{abstract}
 We consider the problem of deciding whether an input graph~$G$ admits a topological embedding into a two-dimensional simplicial complex~$\complex$. This problem includes, among others, the embeddability problem of a graph on a surface and the topological crossing number of a graph, but is more general.

The problem is NP-complete when $\complex$ is part of the input, and we give a polynomial-time algorithm if the complex $\complex$ is fixed.

Our strategy is to reduce the problem to an embedding extension problem on a surface, which has the following form: Given a subgraph $H'$ of a graph $G'$, and an embedding of $H'$ on a surface $S$, can that embedding be extended to an embedding of $G'$ on $S$? Such problems can be solved, in turn, using a key component in Mohar's algorithm to decide the embeddability of a graph on a fixed surface (STOC 1996, SIAM J.~Discr.\ Math.\ 1999). 
\end{abstract}

\ifarxiv

\bigskip

This is the version with Appendix of a paper to appear in \emph{Proceedings of the 34th International Symposium on Computational Geometry} (SoCG 2018).

\bigskip\bigskip
\fi

\section{Introduction}
\subparagraph*{Topological embedding problems.}
Topological embedding problems are among the most fundamental problems in computational topology, already emphasized since the early developments of this discipline~\cite[Section~10]{deg-ct-99}.  Their general form is as follows: Given topological spaces $X$ and~$Y$, does there exist an embedding (a continuous, injective map) from~$X$ to~$Y$?  Since a finite description of $X$ and~$Y$ is needed, typically they are represented as finite simplicial complexes, which are topological spaces obtained by attaching simplices (points, segments, triangles, tetrahedra, etc.) of various dimensions together.

The case where the host space~$Y$ equals~$\R^d$ (or, almost equivalently, $\SSS^d$, which can be modeled as a simplicial complex) has been studied the most.  The case $d=2$ corresponds to the planarity testing problem, which has attracted considerable interest~\cite{p-pte-06}.  The case $d=3$ is much harder, and has only recently been shown to be decidable by Matou\v{s}ek, Sedgwick, Tancer, and Wagner~\cite{mstw-e3sd-18}.  The general problem for arbitrary~$d$ has been extensively studied in the last few years, starting with hardness results by Matou\v{s}ek, Tancer, and Wagner~\cite{mtw-hescr-11}, and continuing with some algorithmic results in a series of articles; we refer to Matou\v{s}ek et al.~\cite[Introduction]{mstw-e3sd-18} for a state of the art.

What about more general choices of~$Y$?  The case where $Y$ is a graph is essentially the subgraph homeomorphism problem, asking if $Y$~contains a subdivision of a graph~$X$.  This is hard in general, easy when $Y$~is fixed, and polynomial-time solvable for every fixed~$X$, by using graph minor algorithms.  The case where $X$ is a graph and~$Y$ a 2-dimensional simplicial complex that is homeomorphic to a surface has been much investigated, also in connection to topological graph theory~\cite{mt-gs-01} and algorithms for surface-embedded graphs~\cite{e-cocb-12,dmst-apgb-11}: The problem is NP-complete, as proved by Thomassen~\cite{t-ggpnc-89}, but Mohar~\cite{m-ltaeg-99} has proved that it can be solved in linear time if $Y$ is fixed (in some recent works, the proof has been simplified and the result extended~\cite{kmr-sltae-08,kp-dvgbg-17}).  The case where $X$ is a 2-complex and $Y$ is (a 2-complex homeomorphic to) a surface essentially boils down to the previous case; see Mohar~\cite{m-mg2c-97}.  More recently, {\v C}adek, Kr{\v c}{\'a}l, Matou{\v s}ek, Vok{\v r}{\'\i}nek, and Wagner~\cite[Theorem~1.4]{ckmvw-ptchg-14} considered the case where the host complex~$Y$ has an arbitrary (but fixed) dimension; they provide a polynomial-time algorithm for the related \emph{map extension problem}, 
under some assumptions on the dimensions of $X$ and~$Y$; in particular, $Y$ must have trivial fundamental group (because they manipulate in an essential way the homotopy groups of~$Y$, which have to be Abelian); but the maps they consider need not be embeddings.

Another variation on this problem is to try to embed $X$ such that it extends a given partial embedding of $X$ (we shall consider such \emph{embedding extension problems} later). This problem has already been studied in some particular cases; in particular, Angelini, Battista, Frati, Jel{\'i}nek, Kratochv{\`\i}l, Patrignani, and Rutter~\cite[Theorem 4.5]{angelini2015testing} provide a linear-time algorithm to decide the embedding extension problem of a graph in the plane.

\subparagraph*{Our results.}
In this article, we study the topological embedding problem when $X$ is an arbitrary graph~$G$, and $Y$ is an arbitrary two-dimensional simplicial complex~$\complex$ (actually, a simplicial complex of dimension at most two---abbreviated as \emph{2-complex} below).  Formally, we consider the following decision problem:

\begin{problem}
\Embed$(n,c)$:\\
\Input: A graph~$G$ with $n$ vertices and edges, and a 2-complex~$\complex$ with $c$ simplices.\\
\Question: Does $G$ have a topological embedding into~$\complex$?
\end{problem}

\noindent(We use the parameters $n$ and~$c$ whenever we need to refer to the input size.)  Here are our main results:
\begin{theo}\label{T:np}
  The problem \Embed{} is NP-complete.
\end{theo}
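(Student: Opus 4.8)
The plan is to establish membership in NP and NP-hardness separately; the latter is essentially inherited from a known result, and the work lies in the former. For \textbf{NP-hardness}, observe that \Embed{} specializes, when $\complex$ is required to be homeomorphic to a surface, to the problem of deciding whether a graph embeds into the orientable surface of a prescribed genus: given an instance $(G,g)$ of that problem, let $\complex$ be a fixed triangulation of the orientable surface of genus~$g$; it has $O(g)$ simplices and is computable in time polynomial in~$g$, and $G$ embeds into~$\complex$ if and only if the genus of~$G$ is at most~$g$ (since the genus-$g'$ surface embeds into the genus-$g$ surface whenever $g'\le g$). Since deciding whether the genus of a graph is at most a prescribed value is NP-hard~\cite{t-ggpnc-89}, so is \Embed{}.

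For \textbf{membership in NP}, the plan is to exhibit a polynomial-size certificate, namely a combinatorial description of an embedding. Assume $G$ embeds into~$\complex$ and fix an embedding~$\phi$. First I would bring $\phi$ into normal form: perturb it so that the images of the vertices of~$G$ avoid the $1$-skeleton~$\complex^{(1)}$, so that no edge of~$G$ runs along~$\complex^{(1)}$, and so that the edges of~$G$ meet~$\complex^{(1)}$ transversally. Then $\phi(G)\cap\complex^{(1)}$ is a finite set~$X$ of transverse points lying in interiors of $1$-simplices, and cutting $\phi(G)$ along~$X$ yields pieces each contained in a single open simplex; in particular, the image of~$G$ inside the closure of each $2$-simplex~$t$ is a planar graph drawn in the disk~$\overline t$, meeting~$\partial t$ in a well-defined cyclic sequence of ``pins''. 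The certificate records: the simplex of~$\complex$ containing the image of each vertex of~$G$; for each edge of~$G$, the sequence of $1$-simplices it crosses; and, for each $2$-simplex, the cyclic order of its pins on~$\partial t$ together with a planar embedding of the induced graph in~$\overline t$ realizing that boundary order. Checking such a certificate amounts to verifying that the crossing sequences agree along shared $1$-simplices and, for each $2$-simplex, that the requested planar embedding with prescribed outer-boundary word exists (a planarity test); both run in polynomial time. Hence \Embed{} lies in NP as soon as one knows that one may always take $|X|$ of size polynomial in~$n$ and~$c$.

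The step I expect to be \textbf{the main obstacle} is establishing exactly this bound. I would take, among all normal-form embeddings, one minimizing~$|X|$, and argue it crosses~$\complex^{(1)}$ few times by ruling out ``empty bigons'': if an edge~$e$ of~$G$ meets a $1$-simplex~$f$ at two points consecutive along~$e$, the subarc of~$e$ between them lies in a single $2$-simplex~$t$ with $f\subseteq\partial t$ and, together with a subarc of~$f$, cuts a disk out of~$\overline t$; choosing such a configuration whose disk is innermost, that disk contains no other part of~$\phi(G)$ and nothing else of~$\phi(G)$ crosses the relevant subarc of~$f$, so one reroutes that subarc of~$e$ just across~$f$, removing two points of~$X$ and creating none, contradicting minimality. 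Since a crossing-minimal embedding then has no such bigon, a counting argument (using the absence of bigons between~$\phi(G)$ and~$\complex^{(1)}$, applied on each surface piece of~$\complex$) bounds~$|X|$ polynomially in~$n$ and~$c$. The delicate point is carrying out the bigon removal rigorously for an \emph{arbitrary} $2$-complex rather than a surface: one must handle $2$-simplices whose boundary edges or vertices are identified, $1$-simplices incident to zero, one, or many $2$-simplices, and vertices of~$\complex$ where many $2$-simplices meet. Dealing with these cases, rather than any single clever idea, is what I expect to take the most work.
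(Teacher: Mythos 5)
Your NP-hardness argument is essentially the paper's, up to one encoding subtlety you gloss over: in \GraphGenus{} the genus $g$ is given in binary, so a triangulation with $\Theta(g)$ simplices cannot in general be written down in time polynomial in the input size $|G|+\log g$. The paper fixes this by observing that when $g\ge |E(G)|$ the instance is trivially positive, so one only ever builds the surface when $g$ is bounded by the size of~$G$. This is minor and easily repaired.

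The membership-in-NP argument, however, has a genuine gap exactly at the step you flag as the main obstacle, and the difficulty is not the case analysis you anticipate but the counting argument itself. Your plan is: take a normal-form embedding minimizing $|X|$, show it has no empty bigons with $\complex^{(1)}$, and deduce $|X|=\mathrm{poly}(n,c)$ ``by a counting argument.'' The implication from bigon-freeness to a polynomial bound is false. Already on the torus with a fixed $O(1)$-simplex triangulation, a single loop edge embedded as a $(p,q)$-curve in minimal position is transverse to the $1$-skeleton, admits no empty bigon with any $1$-simplex (it is in geometric minimal position with each of them), and yet crosses the $1$-skeleton $\Theta(p+q)$ times with $p+q$ unbounded while $n$ and $c$ stay constant. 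Bigon removal is a \emph{local} tautening move; it cannot detect that an edge winds many times around a handle relative to the triangulation, which is a global phenomenon controlled only by the choice of the embedding's isotopy/homotopy class, not by local minimality. So ``crossing-minimal $\Rightarrow$ bigon-free $\Rightarrow$ few crossings'' breaks at the second arrow: the minimal embedding does have few crossings, but bigon-freeness is not the reason, and you have given no mechanism that actually produces or certifies a low-complexity representative. (There is also a secondary problem with your normal form: when $\complex$ has isolated segments or segments incident to a single triangle, edges of~$G$ may be forced to run \emph{along} the $1$-skeleton, so transversality cannot be assumed; but this is patchable, unlike the counting step.)

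The paper sidesteps intersection counting entirely. It first reduces (Propositions~\ref{P:pure} and~\ref{P:surf}) to $(cn)^{O(c)}$ instances of \EEPSurf{} on a surface~$\surf$; the certificate names which instance is positive (polynomially many bits since the exponent is $O(c\log(cn))$) and then certifies that instance by the combinatorial map of a \emph{cellular} embedding of a supergraph $G'\supseteq G$ (with $O(c)$ added edges, via Proposition~\ref{P:cell}) extending the prescribed partial embedding. Verifying a combinatorial map against~$\surf$ is an Euler-formula computation and never refers to the triangulation's $1$-skeleton, which is precisely how the winding problem above is avoided. If you want to keep your more geometric certificate, you would need to replace the bigon argument by an explicit construction of a low-complexity embedding --- at which point you are rebuilding the paper's reduction chain.
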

\begin{theo}\label{T:main}
  The problem \Embed$(n,c)$ can be solved in time $f(c)\cdot n^{O(c)}$, where $f$ is some computable function of~$c$.
\end{theo}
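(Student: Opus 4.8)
The strategy is to reduce \Embed{} to boundedly many instances of the embedding extension problem \EEP{} on surfaces of bounded complexity, each solvable by the key subroutine of Mohar's embeddability algorithm, after enumerating the $f(c)\cdot n^{O(c)}$ possible ways in which $G$ can meet the ``singular part'' of~$\complex$. To set this up I would first subdivide $\complex$ into a clean 2-complex in which the link of every vertex is a well-understood graph, and distinguish \emph{regular edges} (belonging to exactly two triangles, or to exactly one---the latter lying on the boundary) from \emph{singular edges} (belonging to zero or to at least three triangles), and likewise \emph{regular vertices} (link a single path or cycle) from \emph{singular vertices}. Let $\complex_s$ be the singular subcomplex, the union of all singular edges and vertices together with the purely one-dimensional part of~$\complex$; it is a graph with $O(c)$ edges and vertices. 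Cutting $\complex$ along $\complex_s$ produces a disjoint union of compact surfaces with boundary $\surf_1,\dots,\surf_k$, $k=O(c)$, each of complexity (genus plus number of boundary circles plus number of triangles) $O(c)$; all of the ``two-dimensional topology'' of~$\complex$ is thus isolated in the pieces $\surf_i$, while the rest of the combinatorial complexity is concentrated in the fixed-size graph~$\complex_s$. I would also reduce to the case where $G$ is connected; the disconnected case needs a short separate treatment, since planar components disjoint from $\complex_s$ can always be absorbed into a single two-dimensional cell while only $O(c)$ such components can be non-planar.

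\emph{Discretizing an embedding.} Suppose $\Gamma\colon G\hookrightarrow\complex$ is an embedding. Up to ambient isotopy I would put $\Gamma(G)$ in general position with respect to~$\complex_s$: each singular vertex is then in the image of at most one point of~$G$ (since $\Gamma$ is injective), with prescribed local rotation data; the part of $G$ mapped onto edges of $\complex_s$ is a ``thin'' configuration; and every other edge of $G$ meets each singular edge of $\complex$ transversally in a finite set of points, each labelled by the pair of incident triangles the arc passes between. The delicate point is to show that, after isotopy, the crossing points along each singular edge organize into only boundedly many \emph{bundles} of parallel strands, each described by bounded topological data and a single multiplicity in $\{0,1,\dots,n\}$, the order-preserving way in which the strands of a bundle reconnect on the two sides being forced; this yields a \emph{scheme} describing the whole configuration. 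A careful count then gives $f(c)\cdot n^{O(c)}$ schemes: the injective partial map from $V(G)$ to the singular vertices and the choice of the thin part contribute $n^{O(c)}$, the topological data over the $O(c)$ singular features contribute a factor $f(c)$, and the bundle multiplicities contribute $n^{O(c)}$.

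\emph{Reduction to embedding extension.} Fix a scheme~$\tau$. Subdivide the edges of $G$ at the crossing points prescribed by~$\tau$ and distribute the resulting pieces of $G$ among $\surf_1,\dots,\surf_k$ and $\complex_s$ as $\tau$ dictates. The scheme prescribes, for each $\surf_i$, a finite set of marked points and arcs on $\partial\surf_i$, which assemble---with labels made consistent across the pieces they connect---into a graph $H_i$ that is canonically embedded on $\surf_i$ (it lies on the boundary). Let $G_i$ be the portion of $G$ assigned to $\surf_i$, augmented by $H_i$ with endpoints identified along~$H_i$. Then $G$ has an embedding in $\complex$ realizing $\tau$ if and only if, for every $i$, the graph $G_i$ embeds in $\surf_i$ extending the given embedding of~$H_i$: the forward direction is read off from~$\Gamma$, and the converse follows by regluing the $k$ embeddings along $\complex_s$, where the thin part and the multiplicities guarantee the regluing is consistent. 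Each of the $k=O(c)$ resulting instances is an \EEP{} instance on a surface of complexity $O(c)$ (capping boundary circles with disks if one prefers to phrase it on a closed surface), which the subroutine from Mohar's algorithm solves in time $f(c)\cdot n^{O(c)}$. Running over all schemes and all pieces yields the claimed bound.

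\emph{Main obstacle.} I expect the hardest part to be the discretization step: proving that an arbitrary embedding can be isotoped so that its intersection with the fixed graph $\complex_s$ is captured by a scheme drawn from a set of size $f(c)\cdot n^{O(c)}$---in particular controlling how edges of $G$ may repeatedly cross, or run along, singular edges, and how they pass through singular vertices---together with the bookkeeping needed to make the regluing of the third step go through. Once the interaction with $\complex_s$ is discretized in this way, the remaining content is genuinely a surface embedding extension problem and is handled by existing machinery.
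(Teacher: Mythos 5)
Your high-level architecture matches the paper's: isolate the singular locus, enumerate the ways $G$ can interact with it, reduce to embedding extension problems on surfaces, and finish with the subroutine from Mohar's algorithm. But there are two places where your plan has a genuine gap, and in both the paper takes a different (and easier) route precisely to avoid the difficulty you run into.

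First, the discretization step you yourself flag as the main obstacle is not resolved, and it need not be attacked head-on. Your ``bundles of parallel strands with a single multiplicity'' along singular edges is exactly the kind of statement that is hard to prove for arbitrary embeddings, and the paper never needs it. The point is that in a 2-complex the only segments incident to three or more triangles form a 3-book, and if $\complex$ contains a 3-book then \emph{every} graph embeds (Proposition~\ref{P:3-book}), so that case is dismissed outright rather than analyzed. After also disposing of isolated segments (the impure part), the non-manifold locus of $\complex$ is a finite set of at most $c$ \emph{points}, the singular nodes. The paper then subdivides each edge of $G$ at most $c$ times so that, without loss of generality, only vertices of $G$ are mapped onto singular nodes, and simply guesses the resulting partial assignment from the singular nodes to $V(G')\cup\{\varepsilon\}$ --- that is where the $n^{O(c)}$ enumeration comes from, with no transversality, bundle, or multiplicity analysis at all. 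Each singular node's neighborhood is then replaced by a sphere with holes, with a few loops and pendant edges added to the prescribed subgraph $H$ to lock in the local combinatorics (Lemma~\ref{L:surf-bound}). Your scheme-counting claim, by contrast, is left unproved and would require controlling how edges of $G$ wind around and run along the singular part, which is the hard content you have deferred.

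Second, the instances you hand to Mohar's subroutine are not of the form it accepts. After cutting, your prescribed subgraph $H_i$ lies on $\partial\surf_i$, so its faces include the entire interior of $\surf_i$: the embedding of $H_i$ is far from cellular, whereas the subroutine of~\cite{m-ltaeg-99} solves embedding extension problems only when the pre-embedded subgraph is cellularly embedded (and, after a further reduction, has no local bridges). Bridging this is a substantial part of the paper (Section~\ref{S:cellularise}): one iteratively guesses a pair of vertices and attachment occurrences and adds a path of $G$ to $H$ that either joins two boundary components of a face or is non-null-homologous in a positive-genus face, using the 3-path condition (Lemma~\ref{L:3paths}) to show such a path exists whenever an embedding exists, and an ambient-homeomorphism argument to show that the particular drawing of the guessed path does not matter. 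This induction on the cellularity defect costs another factor of $n^{O(c)}$ and cannot be waved away as ``existing machinery.'' Without it, the final step of your plan does not go through as stated.
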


As for Theorem~\ref{T:np}, it is straightforward that the problem is NP-hard (as the case where $\complex$ is a surface is already NP-hard); the interesting part is to provide a certificate checkable in polynomial time when an embedding exists.  Note that Theorem~\ref{T:main} shows that, for every fixed complex~$\complex$, the problem of deciding whether an input graph embeds into~$\complex$ is polynomial-time solvable.  Actually, our algorithm is explicit, in the sense that, if there exists an embedding of~$G$ on~$\complex$, we can provide some representation of such an embedding (in contrast to some results in the theory of graph minors, where the existence of an embedding can be obtained without leading to an explicit construction).

\subparagraph*{Why do 2-complexes look harder than surfaces?}

A key property of the class of graphs embeddable on a fixed surface is that it is minor-closed: Having a graph~$G$ embeddable on a surface~$\surf$, removing or contracting any edge yields a graph embeddable on~$\surf$.  By Robertson and Seymour's theory, this immediately implies a cubic-time algorithm to test whether a graph~$G$ embeds on~$\surf$, for every fixed surface~$\surf$~\cite{rs-gm13d-95}.  In contrast, the class of graphs embeddable on a fixed 2-complex is, in general, not closed under taking minors, and thus this theory does not apply.  For example, let $\complex$ be obtained from two tori by connecting them together with a line segment, and let $G$ be obtained from two copies of~$K_5$ by joining them together with a new edge~$e$; then $G$ embeds into~$\complex$, but the minor obtained from~$G$ by contracting~$e$ does not.

Two-dimensional simplicial complexes are topologically much more complicated than surfaces.  For example, there exist linear-time algorithms to decide whether two surfaces are homeomorphic (this amounts to comparing the Euler characteristics, the orientability characters, and, in case of surfaces with boundary, the numbers of boundary components), or to decide whether a closed curve is contractible (see Dey and Guha~\cite{dg-tcs-99}, Lazarus and Rivaud~\cite{lr-hts-12}, and Erickson and Whittlesey~\cite{ew-tcsr-13}).  In contrast, the homeomorphism problem for 2-complexes is as hard as graph isomorphism, as shown by \'O~D{\'u}nlaing, Watt, and Wilkins~\cite{oww-h2ceg-00}.  Moreover, the contractibility problem for closed curves on 2-complexes is undecidable; even worse, there exists a fixed 2-complex~$\complex$ such that the contractibility problem for closed curves on~$\complex$ is undecidable (this is because every finitely presented group can be realized as the fundamental group of a 2-complex, and there is such a group in which the word problem is undecidable, by a result of Boone~\cite{b-wp-59}; see also Stillwell~\cite[Section~9.3]{s-ctcgt-93}).

Despite this stark contrast between surfaces and 2-complexes, if we care only on the polynomiality or non-polynomiality, our results show that the complexities of embedding a graph into a surface or a 2-complex are similar:  If the host space is not fixed, the problem is NP-complete; otherwise, it is polynomial-time solvable.  Compared to the aforementioned hard problems on general 2-complexes, one feature related to our result is that every graph embeds on a 3-book (a complex made of three triangles sharing a common edge); thus, we only need to consider 2-complexes without 3-book, for otherwise the problem is trivial; this significantly restricts the structure of the 2-complexes to be considered.  The problem of whether \Embed{} admits an algorithm that is fixed-parameter tractable in terms of the parameter~$c$, however, remains open for general complexes, whereas it is the case when restricting to surfaces~\cite{m-ltaeg-99}.

\subparagraph*{Why is embedding graphs on 2-complexes interesting?}

First, let us remark that, if we consider the problem of embedding graphs into simplicial complexes, then the case that we consider, in which the complex has dimension at most two, is the only interesting one, since every graph can be embedded in a single tetrahedron.

We have already noted that the problem we study is more general than the problem of embedding graphs on surfaces.  It is indeed quite general, and some other problems studied in the past can be recast as an instance of \Embed{} or as variants of it.
For example, the \emph{crossing number} of a graph~$G$ is the minimum number of crossings in a (topological) drawing of~$G$ in the plane.  Deciding whether a graph~$G$ has crossing number at most~$k$ is NP-hard, but fixed-parameter tractable in~$k$, as shown by Kawarabayashi and Reed~\cite{kr-ccnlt-07}.  This is easily seen to be equivalent to the embeddability of~$G$ into the complex obtained by removing $k$~disjoint disks from a sphere and adding, for each resulting boundary component~$b$, two edges with endpoints on~$b$ whose cyclic order along~$b$ is interlaced.  Of course, the embeddability problem on a 2-complex is more general and contains, for example, the problem of deciding whether there is a drawing of a graph~$G$ on a surface of genus~$g$ with at most $k$ crossings.  In topological graph theory, embeddings of graphs on pseudosurfaces (which are special 2-complexes) have been considered; see Archdeacon~\cite[Section~5.7]{a-tgts-96} for a survey.
Slightly more remotely, a \emph{book embedding} of a graph~$G$ (see, e.g., Malik~\cite{m-ggghp-94}) is also an embedding of~$G$ into a particular 2-complex, with additional constraints on the embedding.

\subparagraph*{Strategy of the proof and organization of the paper.}

For clarity of exposition, in most of the paper, we focus on developing an algorithm for the problem \Embed{} (Theorem~\ref{T:main}).  Only at the end (Section~\ref{S:main_proof}) we explain why our techniques imply that the problem is in NP.  The idea of the algorithm is to progressively reduce the problem to simpler problems.  We first deal with the case where the complex~$\complex$ contains a 3-book (Section~\ref{S:3-book}).  From Section~\ref{S:impure} onwards, we reduce \Embed{} to \emph{embedding extension problems} (EEP), similar to the \Embed{} problem except that an embedding of a subgraph~$H$ of the input graph~$G$ is already specified.  In Section~\ref{S:impure}, we reduce \Embed{} to EEPs on a pure 2-complex (in which every segment of the complex~$\complex$ is incident to at least one triangle).  In Section~\ref{S:pure}, we further reduce it to EEPs on a surface.  In Section~\ref{S:cellularise}, we reduce it to EEPs on a surface in which every face of the subgraph~$H$ is a disk.  Finally, in Section~\ref{S:cellular_solving}, we show how to solve EEPs of the latter type using a key component in an algorithm by the third author~\cite{m-ltaeg-99} to decide embeddability of a graph on a surface.


\section{Preliminaries}\label{S:prelims}
\subsection{Embeddings of graphs into 2-complexes}

A \emphdef{2-complex} is an abstract simplicial complex of dimension at most two: a finite set of 0-simplices called \emphdef{nodes}, 1-simplices called \emphdef{segments}, and 2-simplices called \emphdef{triangles} (we use this terminology to distinguish from that of vertices and edges, which we reserve for graphs); each segment is a pair of nodes, and each triangle is a triple of nodes; moreover, each subset of size two in a triangle must be a segment.  Each 2-complex~$\complex$ corresponds naturally to a topological space, obtained in the obvious way: Start with one point per node in~$\complex$; connect them by segments as indicated by the segments in~$\complex$; similarly, for every triangle in~$\complex$, create a triangle whose boundary is made of the three segments contained in that triangle.  By abuse of language, we identify $\complex$ with that topological space.  To emphasize that we consider the abstract simplicial complex and not only the topological space, we sometimes use the name \emphdef{triangulation} or \emphdef{triangulated complex}.

In this paper, graphs are finite, undirected, and may have loops and multiple edges.  In a similar way as for 2-complexes, each graph has an associated topological space; an \emphdef{embedding} of a graph~$G$ into a 2-complex~$\complex$ is an injective continuous map from (the topological space associated to) $G$ to (the topological space associated to)~$\complex$.

\subsection{Structural aspects of 2-complexes}

We say that a 2-complex \emphdef{contains a 3-book} if some three distinct triangles share a common segment.

Let $p$ be a node of~$\complex$.
A \emphdef{cone at~$p$} is a cyclic sequence of triangles $t_1,\ldots,t_k,t_{k+1}=t_1$, all incident to~$p$, such that, for each $i=1,\ldots,k$, the triangles $t_i$ and~$t_{i+1}$ share a segment incident with~$p$, and any other pair of triangles have only $p$ in common.  A \emphdef{corner at~$p$} is an inclusionwise maximal sequence of triangles $t_1,\ldots,t_k$, all incident to~$p$, such that, for each~$i=1,\ldots,k-1$, the triangles $t_i$ and~$t_{i+1}$ share a segment incident with~$p$, and any other pair of triangles have only $p$ in common.  An \emphdef{isolated segment at~$p$} is a segment incident to~$p$ but not incident to any triangle.

If $\complex$ contains no 3-books, the set of segments and triangles incident with a given node~$p$ of~$\complex$ are uniquely partitioned into cones, corners, and isolated segments.  We say that $p$ is a \emphdef{regular node} if all the segments and triangles incident to~$p$ form a single cone or corner.  Otherwise, $p$ is a \emphdef{singular node}.
A 2-complex is \emphdef{pure} if it contains no isolated segment, and each node is incident to at least one segment.

\subsection{Embedding extension problems and reductions}

An \emphdef{embedding extension problem} (EEP) is a decision problem defined as follows:

\begin{problem}
  \EEP$(n,m,c)$:\\
  \Input: A graph~$G$ with $n$ vertices and edges, a subgraph~$H$ of~$G$ with $m$ vertices and edges, and an embedding~$\Pi$ of~$H$ into a 2-complex~$\complex$ with $c$ simplices.\\
  \Question: Does $G$ have an embedding into~$\complex$ whose restriction to~$H$ is~$\Pi$?
\end{problem}

To be precise, we will have to explain how we represent the embedding~$\Pi$, but this will vary throughout the proof, and we will be more precise about this in subsequent sections.  Let us simply remark that, since the complexity of our algorithm is a polynomial of large degree (depending on the complex~$\complex$) in the size of the input graph, the choice of representation is not very important, because converting between any two reasonable representations is possible in polynomial time.

We will reduce our original problem to more and more specialized EEPs.  We will use the word ``reduce'' in a somewhat sloppy sense:  A decision problem~$P$ \emphdef{reduces} to $k$ instances of the decision problem~$P'$ if solving these $k$ instances of~$P'$ allows to solve the instance of~$P$ in time $O(k)$.  We will have to be more precise when we consider the NP-completeness of \Embed{} in Section~\ref{S:main_proof}.

\subsection{Surfaces}

In Section~\ref{S:cellularise}, we will assume some familiarity with surface topology; see, e.g., \cite{mt-gs-01,s-ctcgt-93,c-ctgs-17} for suitable introductions under various viewpoints.  We recall some basic definitions and properties.  A \emphdef{surface} $\surf$ is a compact, connected Hausdorff topological space in which every point has a neighborhood homeomorphic to the plane.  Every surface~$\surf$ is obtained from a sphere by:
\begin{itemize}
    \item either removing $g/2$ open disks and attaching a handle (a torus with an open disk removed) to each resulting boundary component, for an even, nonnegative number~$g$ called the \emph{(Euler)} \emphdef{genus} of~$\surf$; in this case, $\surf$ is \emphdef{orientable};
    \item or removing $g$ open disks and attaching a M\"obius band to each resulting boundary component, for a positive number~$g$ called the \emphdef{genus} of~$\surf$; in this case, $\surf$ is \emphdef{non-orientable}.
\end{itemize}
A \emphdef{surface with boundary} is obtained from a surface by removing a finite set of interiors of disjoint closed disks.  The boundary of each disk forms a \emphdef{boundary component} of~$\surf$.  A \emphdef{possibly disconnected surface} is a disjoint union of surfaces.
An embedding of~$G$ into a surface~$\surf$, possibly with boundary, is \emphdef{cellular} if each face of the embedding is homeomorphic to an open disk.  If $G$ is cellularly embedded on a surface with genus~$g$ and $b$ boundary components, with $v$~vertices, $e$~edges, and $f$~faces, then Euler's formula stipulates that $2-g-b=v-e+f$.

An \emphdef{ambient isotopy} of a surface with boundary~$\surf$ is a continuous family $(h_t)_{t\in[0,1]}$ of self-homeomorphisms of~$\surf$ such that $h_0$ is the identity.

\section{Reduction to complexes containing no 3-book}\label{S:3-book}
The following folklore observation allows us to solve the problem trivially if~$\complex$ contains a 3-book.  We include a proof for completeness.
\begin{figure}
\centering
\includegraphics[width=\linewidth]{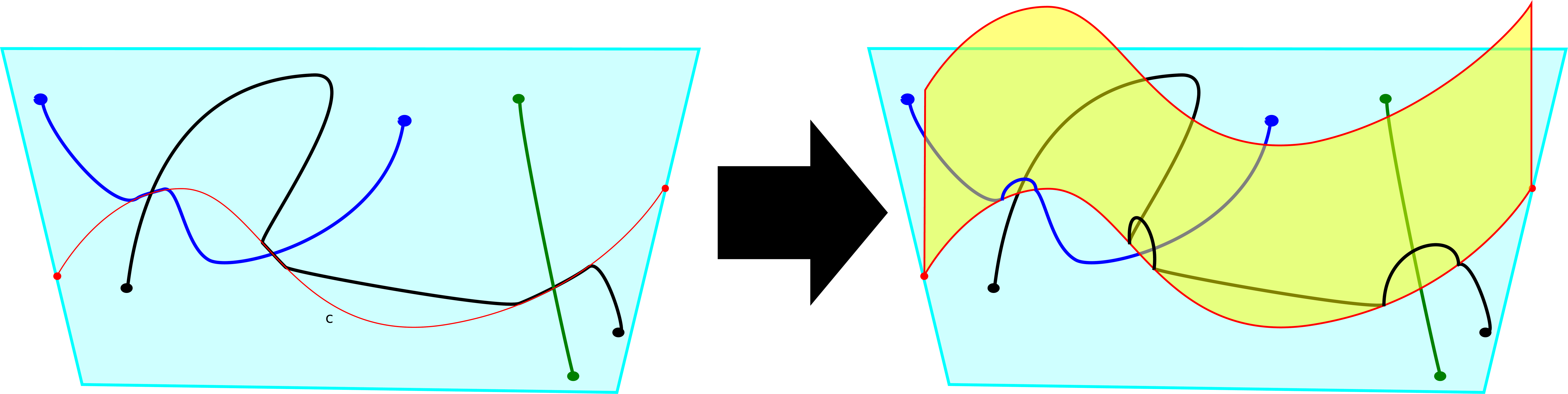}
\caption{Illustration of the proof of
  Proposition~\ref{P:3-book}. Left: The drawing of the graph~$G$ and the curve~$c$ (in thin).   Right: The construction of the 3-book and the modification of the drawing.}
\label{F:3-book}
\end{figure}
\begin{prop}\label{P:3-book}
  If $\complex$ contains a 3-book, then every graph embeds into~$\complex$.
\end{prop}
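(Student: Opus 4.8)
The plan is to reduce to the case of a single 3-book. By hypothesis $\complex$ contains three distinct triangles $P_1,P_2,P_3$ sharing a common segment~$e$; being distinct 2-simplices that all contain~$e$, they pairwise intersect exactly in~$e$ (the nodes of a triangle being distinct), so their union $B=P_1\cup P_2\cup P_3$ is a subcomplex of~$\complex$ made of three triangles glued along the single common edge~$e$. Any embedding of a graph into~$B$ is an embedding into~$\complex$, so it suffices to prove that every graph~$G$ embeds into~$B$. Topologically, $P_1\cup P_2$ is a closed disk in which the realization of~$e$ is an arc joining two boundary points; its interior is split by the interior~$e^\circ$ of~$e$ into two open half-disks, one in~$P_1$ and one in~$P_2$, while the third triangle~$P_3$ is a disk attached to $P_1\cup P_2$ only along~$e$.

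First I would fix any drawing of~$G$ in the plane in general position: each edge is embedded, distinct edges cross in finitely many points $x_1,\dots,x_k$, each crossing is transversal and involves exactly two edges, and no crossing is a vertex. It is elementary that one may moreover take $x_1,\dots,x_k$ to be collinear---start from an arbitrary drawing and apply a self-homeomorphism of the plane taking the finite crossing set onto a line and keeping every vertex off that line. Identifying the plane with the interior of $P_1\cup P_2$ so that this line becomes~$e^\circ$, and then perturbing slightly near each~$x_j$ so that the two edges crossing there meet~$e$ transversally at~$x_j$, I obtain a continuous map $f\colon G\to P_1\cup P_2\subseteq B$ that is injective except that for each~$j$ exactly two edges pass through~$x_j\in e^\circ$, crossing there.

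It remains to repair the $k$ double points of~$f$ locally, using the third triangle. Pick pairwise disjoint disk neighborhoods $\Delta_1,\dots,\Delta_k\subseteq B$ of $x_1,\dots,x_k$, each so small that $f(G)\cap\Delta_j$ is exactly the two arcs of~$f$ through~$x_j$; then $\Delta_j$ meets~$e$ in a subarc $\delta_j\ni x_j$ and is the union of three half-disks, one in each~$P_i$, all glued along~$\delta_j$. Inside~$\Delta_j$ these two arcs form an ``X'' crossing~$\delta_j$ at~$x_j$; I would keep one of them unchanged and reroute the other: from its end in~$P_1$ bring it to a point~$q$ of~$\delta_j$, push it across the $P_3$-half-disk of~$\Delta_j$ to a point~$q'$ of~$\delta_j$, and continue in~$P_2$ to its other end, with~$x_j$ strictly between~$q$ and~$q'$ on~$\delta_j$. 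A short planarity check in each of the three half-disks shows that the two arcs are then disjoint in~$\Delta_j$ while keeping their original endpoints on~$\partial\Delta_j$. Doing this independently in every~$\Delta_j$, and leaving~$f$ unchanged outside~$\bigcup_j\Delta_j$ where it is already injective, yields an embedding of~$G$ into~$B$, hence into~$\complex$.

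The two points needing a little care are that all crossings of a plane drawing can be placed simultaneously on a single arc (so that the one segment~$e$ can host all of the local repairs), and the local disjointness check for the rerouting; both are routine, and they produce the picture shown above. One may equivalently organize the argument around a single simple curve~$c$ passing through all the crossings and otherwise disjoint from the drawing, as indicated in the figure.
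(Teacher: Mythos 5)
Your proof is correct and takes essentially the same approach as the paper's: draw $G$ with crossings inside a disk formed by two pages of the 3-book, arrange all crossings along the spine (the paper equivalently routes a simple curve $c$ through the crossings and attaches the third sheet along~$c$), and resolve each crossing by rerouting one of the two strands through the third page. The differences are purely presentational, and the two steps you flag as ``needing a little care'' are treated at the same level of rigor as in the paper's own argument.
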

\begin{proof}
  Let $G$ be a graph.  We first draw $G$, possibly with crossings, in general position in the interior of a closed disk~$D$.  Let $c$ be a simple curve in~$D$ with endpoints on $\partial D$ and passing through all crossing points of the drawing of~$G$.  By perturbing $c$, we can ensure that, in the neighborhood of each crossing point of that drawing, $c$ coincides with the image of one of the two edges involved in the crossing.  See Figure~\ref{F:3-book}, left. 
  
  Let $D'$ be a closed disk disjoint from~$D$.  We attach $D'$ to~$D$ by identifying $c$ with a part of the boundary of~$D'$. Now, in the neighborhood of each crossing of the drawing of~$G$, we push  inside~$D'$ the part of the edge coinciding with~$c$, keeping its endpoints fixed.  See Figure~\ref{F:3-book}, right.  This removes the crossings.
  
  So $G$ embeds in the topological space obtained from~$D$ by attaching a part of the boundary of~$D'$ along~$c$.  But this space embeds in~$\complex$, because $\complex$ contains a 3-book.
 \end{proof}

\section{Reduction to EEPs on a pure 2-complex}\label{S:impure}

Our next task is to reduce the problem \Embed{} to a problem on a pure 2-dimensional complex.  More precisely, let \emphdef{\EEPSing} be the problem \EEP, restricted to instances $(G,H,\Pi,\complex)$ where: $\complex$ is a pure 2-complex containing no 3-books; $H$ is a set of vertices of~$G$; and $\Pi$ is an injective map from $H$ to the nodes of~$\complex$ such that $\Pi(H)$ contains all singular nodes of~$\complex$.  In this section, we prove the following result.

\begin{proposition}\label{P:pure}
  Any instance of \Embed$(n,c)$ reduces to $(cn)^{O(c)}$ instances of \EEPSing$(cn,c,O(c))$.
\end{proposition}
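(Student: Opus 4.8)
The plan is to reduce \Embed\ on an arbitrary 2-complex to \EEPSing\ by pre-processing the complex in two stages: first discard the 3-book case, then handle the impure part (isolated segments, corners, isolated nodes) by guessing how the graph $G$ interacts with it.

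First I would dispose of the 3-book case: by Proposition~\ref{P:3-book}, if $\complex$ contains a 3-book the answer is trivially ``yes,'' so we may assume $\complex$ contains no 3-book, and in particular the segments and triangles around every node partition uniquely into cones, corners, and isolated segments, and the singular nodes are well-defined and number at most $O(c)$.

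Next I would separate $\complex$ into its \emph{pure part} $\complex'$ (the union of all triangles, together with all segments incident to a triangle, i.e.\ the cones and corners) and its \emph{impure part} (the isolated segments and isolated nodes, i.e.\ a graph $\Gamma$ glued to $\complex'$ along some of the nodes). An embedding of $G$ into $\complex$ restricts to something very rigid on the impure part: since $\Gamma$ is just a graph, the portion of $G$ mapped into $\Gamma$ is itself (the image of) a subgraph, and the portion mapped into $\complex'$ is a subgraph $G_0 \subseteq G$. The key observation is that $G_0$ and $G \setminus G_0$ meet only at vertices that land on nodes of $\complex'$, and each such vertex of $G_0$ is mapped to a specific node of $\complex'$. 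So I would guess: (i) which vertices and edges of $G$ are mapped into $\complex'$ (this determines $G_0$, an induced-subgraph-like choice together with a choice of which edges straddle); (ii) for each vertex of $G_0$ lying on a node of $\complex'$, which node; and (iii) enough combinatorial data about how $G \setminus G_0$ plus these interface vertices embeds into the graph $\Gamma$ to certify that an extension on the impure side exists (which, since $\Gamma$ is a graph, is a pure subgraph-homeomorphism check on each component of $\Gamma$, decidable directly in polynomial time). After fixing such a guess, the problem of embedding $G_0$ into $\complex'$ subject to the constraints that the chosen interface vertices go to the chosen nodes is exactly an instance of \EEP\ on the pure, 3-book-free complex $\complex'$, with $H$ the set of interface vertices and $\Pi$ the prescribed map on them. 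To make it an instance of \EEPSing\ I would \emph{additionally} add all singular nodes of $\complex'$ to $H$ and guess, for each singular node, which vertex of $G_0$ (if any) maps to it — there are at most $O(c)$ singular nodes and at most $cn$ choices each (or ``none,'' in which case we can subdivide or pad so that some degree-0 vertex sits there; alternatively we argue that having no vertex on a singular node never helps and handle it by a local argument). This guarantees $\Pi(H)$ contains all singular nodes, so the resulting instance is genuinely of type \EEPSing$(cn, c, O(c))$ — note $H$ has $O(c)$ vertices and the complex $\complex'$ has $O(c)$ simplices, while the guessing of interface vertices may force subdividing edges of $G$ a bounded number of times near each interface, keeping the size $O(cn)$.

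The main obstacle, and where the counting $(cn)^{O(c)}$ comes from, is bounding the number of guesses: naively choosing which vertices of $G$ go to $\complex'$ is a choice among $2^n$ subsets, which is far too many. The fix I would use is that the interface between $G_0$ and the rest of $G$ consists only of vertices mapped to the $O(c)$ \emph{nodes} of $\complex'$, so it suffices to guess, for each of the $O(c)$ nodes of $\complex'$, \emph{which} vertex of $G$ sits there (at most $cn$ choices, or ``none''), and for each of the $O(c)$ segments of $\complex'$ how many edge-pieces of $G$ run through it — and one checks that only a bounded number of pieces can, since a segment of $\complex'$ that is incident to triangles can carry boundedly many edge crossings before a 3-book-like obstruction arises, or more carefully, one subdivides once per piece so that each segment carries at most two pieces. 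Once the $O(c)$ interface vertices and their targets are fixed, $G_0$ is determined as the union of the components of $G$ minus the interface vertices that ``must'' go to $\complex'$, together with a bounded amount of local subdivision; the number of remaining combinatorial choices is $(cn)^{O(c)}$, and each produces one \EEPSing$(cn,c,O(c))$ instance. Correctness in both directions is then routine: any embedding of $G$ into $\complex$ induces such a guess and a solution to the corresponding \EEPSing\ instance (by restricting to $\complex'$ and reading off the interface data), and conversely any solution to some guessed \EEPSing\ instance can be glued with the (separately verified) embedding of $G \setminus G_0$ into the graph $\Gamma$ along the interface nodes to yield an embedding of $G$ into $\complex$.
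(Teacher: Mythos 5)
Your overall architecture matches the paper's: assume no 3-book, guess which vertex of (a subdivision of) $G$ sits at each of the $O(c)$ singular nodes---this is where the $(cn)^{O(c)}$ factor comes from---and then argue that the isolated segments can be dealt with directly, leaving an \EEPSing{} instance on the pure part. Two steps as written, however, do not go through. First, the claim that a segment of the pure part ``can carry boundedly many edge crossings before a 3-book-like obstruction arises'' is false: edges of $G$ may cross or run along a triangle-incident segment arbitrarily many times without creating any obstruction. Fortunately this guess is also unnecessary: since any node incident to an isolated segment is by definition singular, the interface between the pure and impure parts consists only of singular nodes, and how $G$ meets the interior segments and triangles of the pure part is exactly what the residual EEP is left to decide. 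You should delete this step rather than repair it.

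Second, and more seriously, the impure side is not reduced to anything you can enumerate or decide. Once the singular-node assignments are fixed, the subgraph $G\setminus G_0$ that gets mapped into the graph $\Gamma$ of isolated segments is \emph{not} determined: any path-component of $G$, or any pendant path hanging off an interface vertex, is a candidate to be absorbed into an isolated segment, and there are exponentially many such subsets, so ``$G_0$ is determined'' and ``a pure subgraph-homeomorphism check, decidable directly in polynomial time'' both beg the question. What is actually needed is an exchange argument showing that a canonical choice may be assumed without loss of generality---for instance, if $G$ has an edge $uv$ between the vertices assigned to the endpoints of an isolated segment, one may assume that edge is embedded on the segment and delete both; otherwise at most one pendant edge at each endpoint, or path-components of $G$, need be absorbed, and these can be cancelled against the segment one at a time. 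This is precisely the content of the paper's Lemma~\ref{L:reduc-pure}, proved by a sequence of normalization steps each preserving the existence of an embedding respecting the guessed assignment. Finally, note that the subdivision of the edges of $G$ must be performed \emph{before} guessing (as in Lemma~\ref{L:guess-pure}), since the point of $G$ landing on a singular node may a priori be interior to an edge; your proposal only gestures at this.
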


First, a definition.  Consider a map $f:P\to V(G)\cup\{\varepsilon\}$, where $P$ is a set of nodes in~$\complex$ containing all singular nodes of~$\complex$.   We say that an embedding~$\Gamma$ of~$G$ \emphdef{respects}~$f$ if, for each $p\in P$, the following holds:
If $f(p)=\varepsilon$, then $p$ is not in the image of~$\Gamma$; otherwise, $\Gamma(f(p))=p$.

In this section, we will need the following intermediate problem:

\begin{problem}\EmbedResp$(n,m,c)$:\\
\Input: A graph~$G$ with $n$ vertices and edges, a 2-complex~$\complex$ (not necessarily pure) containing no 3-books, with $c$ simplices, and a map~$f$ as above, with domain of size~$m$.\\
\Question: Does $G$ have an embedding into~$\complex$ respecting~$f$?
\end{problem}

\begin{lemma}\label{L:guess-pure}
  Any instance of \Embed$(n,c)$ reduces to $(O(cn))^c$ instances of\\ \EmbedResp$(cn,c,c)$.
\end{lemma}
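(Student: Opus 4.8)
\textit{Plan.} The idea is to reduce \Embed{} to \EmbedResp{} by \emph{guessing}, for each singular node of~$\complex$, which vertex of (a subdivision of)~$G$ is placed on it, if any. We may assume throughout that $\complex$ contains no 3-book: otherwise, by Proposition~\ref{P:3-book}, the instance of \Embed{} is positive and the reduction outputs a single trivially positive instance of \EmbedResp{}.

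\textit{Construction.} Given an instance $(G,\complex)$ of \Embed$(n,c)$ with $\complex$ 3-book-free, first subdivide every edge of~$G$ into~$c$ arcs, obtaining a graph~$G'$; this does not change the underlying topological space, so $G$ embeds into~$\complex$ if and only if~$G'$ does, and $G'$ has $O(cn)$ vertices and edges. Let $P$ be the set of singular nodes of~$\complex$; since $\complex$ has $c$ simplices, $|P|\le c$. The reduction enumerates all maps $f\colon P\to V(G')\cup\{\varepsilon\}$ whose restriction to $f^{-1}(V(G'))$ is injective; there are at most $(O(cn))^{|P|}=(O(cn))^{c}$ of them, and each yields an instance $(G',\complex,f)$ of \EmbedResp$(O(cn),c,c)$ ($\complex$ being, as required, 3-book-free). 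The claim to be proved is that $G$ embeds into~$\complex$ if and only if at least one of these instances is positive.

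\textit{Correctness.} The direction ``some instance positive $\Rightarrow$ $G$ embeds'' is immediate, since an embedding of~$G'$ respecting~$f$ is in particular an embedding of~$G'$, hence yields one of~$G$. For the converse, take an embedding~$\Gamma$ of~$G$ into~$\complex$, regard it as an embedding~$\Gamma'$ of~$G'$ (placing the new vertices at arbitrary interior points of each subdivided edge), and \emph{normalize} it so that every singular node lying in its image is the image of a vertex of~$G'$. Fix an original edge~$e$ of~$G$, regarded as a simple arc in~$\complex$; since $\Gamma$ is injective, $e$ meets each node of~$\complex$ at most once, hence it meets at most $|P|\le c-1$ singular nodes in its interior, so we may reparametrize~$e$ (keeping its image and the order of the $c-1$ subdivision vertices on it) so that those subdivision vertices cover all singular nodes met by~$e$ in its interior. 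Doing this independently on every original edge produces the desired embedding~$\Gamma''$ of~$G'$: again by injectivity of~$\Gamma$, distinct edges of~$G$ are disjoint away from their common endpoints, so the per-edge reparametrizations neither interfere with one another nor move a subdivision vertex onto the image of an original vertex. Now set $f(p)$ to be the (unique, by injectivity) vertex of~$G'$ mapped to~$p$ whenever $p\in\mathrm{image}(\Gamma'')$, and $f(p)=\varepsilon$ otherwise; then $f$ is injective on $f^{-1}(V(G'))$ and $\Gamma''$ respects~$f$, so $(G',\complex,f)$ is a positive instance of \EmbedResp{}. Finally, since the $(O(cn))^{c}$ instances can be produced and their answers combined (by a disjunction, together with the 3-book test) in time $O\big((O(cn))^{c}\big)$, this is a reduction in the sense defined in Section~\ref{S:prelims}.

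\textit{Main obstacle.} The only nontrivial point is the normalization step in the converse direction: showing that an arbitrary embedding can be adjusted so that singular nodes are touched only at vertices. The two facts that make it go through---that a simple arc meets each node at most once (so $c-1$ subdivision vertices per edge suffice) and that injectivity of the embedding keeps the per-edge reparametrizations independent---are precisely what forces both the preliminary subdivision of~$G$ and the blow-up from $n$ to $O(cn)$ in the statement.
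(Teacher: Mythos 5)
Your proof is correct and follows essentially the same route as the paper's: pre-subdivide every edge of $G$ enough times, argue that any embedding can be adjusted so that singular nodes are met only at vertices (the paper subdivides an edge at each singular node it passes through and then pads; you pre-subdivide and reparametrize, which is equivalent), and then enumerate all assignments of singular nodes to vertices of the subdivided graph or to $\varepsilon$. The only cosmetic differences are that you subdivide into $c$ arcs rather than $k\le c$ times (your implicit claim that an embedded edge's interior meets at most $c-1$ singular nodes holds because a nondegenerate arc forces the complex to contain a segment) and that you discard the non-injective maps $f$, which the paper keeps as trivially negative instances.
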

\begin{proof}
  By Proposition~\ref{P:3-book}, we can without loss of generality assume that $\complex$ contains no 3-books.  Let $G'$ be the graph obtained from $G$ by subdividing each edge $k$~times, where $k\le c$ is the number of singular nodes of~$\complex$. We claim that $G$ has an embedding into~$\complex$ if and only if $G'$ has an embedding~$\Gamma'$ into~$\complex$ such that each singular node of~$\complex$ in the image of~$\Gamma'$ is the image of a vertex of~$G'$.

  Indeed, assume that $G$ has an embedding~$\Gamma$ on~$\complex$.  Each time an edge of~$G$ is mapped, under~$\Gamma$, to a singular node~$p$ of~$\complex$, we subdivide this edge and map this new vertex to~$p$; the image of the embedding is unchanged.  This ensures that only vertices are mapped to singular nodes.  Moreover, there were at most $k$ subdivisions, one per singular node. So, by further subdividing the edges until each original edge is subdivided $k$ times, we obtain an embedding of~$G'$ to~$\complex$ such that only vertices are mapped on singular nodes.  The reverse implication is obvious: If $G'$ has an embedding into~$\complex$, then so has~$G$.  This proves the claim.
  
  To conclude, for each map from the set of singular vertices of~$\complex$ to $V(G')\cup\{\varepsilon\}$, we solve the problem whether $G'$ has an embedding on~$\complex$ respecting~$f$.  The graph~$G$ embeds on~$\complex$ if and only if the outcome is positive for at least one such map~$f$.   By construction, there are at most $(kn+1)^k=(O(nc))^c$ such maps, because $V(G')$ has size at most~$kn$.
\end{proof}

\begin{lemma}\label{L:reduc-pure}
  \EmbedResp$(n,m,c)$ reduces to \EEPSing$(n,m,O(c))$.
\end{lemma}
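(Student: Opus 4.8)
The plan is to take an instance $(G,\complex,f)$ of \EmbedResp{} and turn it into an instance $(G',H',\Pi,\complex')$ of \EEPSing{} by (i)~"purifying" the complex~$\complex$, making it pure without affecting embeddability, and (ii)~encoding the map~$f$ as an honest partial embedding $\Pi$ of a vertex subgraph $H'$ of a modified graph~$G'$. For step~(i), note that the only way $\complex$ fails to be pure is that it has isolated segments or nodes incident to no segment. A node incident to no segment is useless: no edge of $G$ can pass through it, and at most one vertex of $G$ can be mapped there; if $f$ forces a vertex onto such a node we can just delete both, and if $f$ forbids it we delete the node. An isolated segment $s=\{p,q\}$ is essentially an interval; a subpath of $G$ can be routed along it, but this interacts with~$f$ only at the endpoints $p$ and~$q$. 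The clean way to handle this is to attach, along each isolated segment, a new triangle (turning $s$ into a "flap"), which does not change whether $G$ embeds, but may create 3-books or singular behavior—so instead we delete each isolated segment entirely and, separately, guess and reroute the (at most $c$) maximal paths of $G$ that $f$ does not pin down. Actually the cleanest route, and the one I would pursue, is: since $\complex$ has no 3-books, thicken each isolated segment into a thin triangulated strip (a quadrilateral made of two triangles) glued to the rest of $\complex$ only at its endpoints; any embedding of $G$ using the isolated segment can be pushed into the strip and conversely, and the strip is pure. This increases the number of simplices by $O(c)$, keeps the no-3-book property (a thin strip attached at two nodes creates no third triangle on any segment), and makes the complex pure. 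Call the result $\complex'$; it has $O(c)$ simplices, is pure, has no 3-books, and $G$ embeds in $\complex$ respecting $f$ iff $G$ embeds in $\complex'$ respecting the obvious induced map~$f'$ (the new interior nodes of the strips are not in the domain of~$f'$, which is harmless since we may always assume $G$ avoids them after a small perturbation, or re-route through them if needed).

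For step~(ii), I would convert $f:P\to V(G)\cup\{\varepsilon\}$ into a partial embedding. Let $H'$ be the set of vertices $\{f(p): p\in P,\ f(p)\neq\varepsilon\}$ of~$G$, regarded as a subgraph with no edges, and let $\Pi$ send each such vertex to the node~$p$ with $f(p)$ equal to that vertex. Since $f$ is required (in the definition of \EmbedResp, via the map~$f$ "as above") to have $P$ containing all singular nodes, $\Pi(H')$ contains all singular nodes of~$\complex$—except those $p$ with $f(p)=\varepsilon$. To deal with the $\varepsilon$-nodes, which forbid any vertex (and hence any point) of $G$ from being mapped there, I would simply delete these singular nodes from $\complex'$, together with all incident simplices, before purifying; equivalently, remove a small open neighborhood of each such node. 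Removing a singular node's neighborhood splits its cones/corners apart and may create new boundary, but the resulting complex still has no 3-books and can be re-purified as above; and $G$ embeds in $\complex'$ avoiding those nodes iff it embeds in the punctured complex. After this deletion, every remaining singular node~$p$ has $f(p)\neq\varepsilon$, so $\Pi(H')$ contains all singular nodes of the final complex, as \EEPSing{} requires. Finally, $\Pi$ must be injective: it is, because $f$ restricted to $f^{-1}(V(G))$ composed with $f$ is a function from distinct nodes to (possibly repeated) vertices—wait, two nodes could map to the same vertex under $f$, which is impossible for an embedding, so such an $f$ is trivially a "no" instance and we output a trivially-unsatisfiable \EEPSing{} instance; otherwise $\Pi$ is injective as needed.

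The resulting instance is $\EEPSing(n, m, O(c))$: the graph is $G$ (or a trivial graph), $n$ vertices and edges; $H'$ has $m\le |P|\le c$ wait, $m$ counts vertices of $H'$, which is at most $|P|$, but the lemma's statement allows the second parameter to stay $m$, and indeed $|H'|\le |P|=m$ (padding with isolated dummies if we want exactly $m$, or just noting $\le m$ suffices since \EEPSing's complexity is monotone in that parameter); and $\complex'$ has $O(c)$ simplices. The reduction produces a single \EEPSing{} instance and runs in time $O(c + n)$, well within the "reduces to" budget.

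The main obstacle, and the place requiring genuine care rather than bookkeeping, is step~(i): verifying that thickening isolated segments (and puncturing at forbidden singular nodes) genuinely preserves embeddability of~$G$ \emph{respecting~$f$}, and that the operations do not inadvertently reintroduce 3-books or break the no-3-book hypothesis needed to talk about cones and corners. The forward direction (an embedding in $\complex$ gives one in $\complex'$) needs a push-off argument along each isolated strip; the backward direction needs that any embedding into $\complex'$ can be retracted to $\complex$ by collapsing each strip back onto its core segment without creating self-intersections—this is a standard mapping-cylinder retraction but must be checked to be compatible with the fixed images of the $H'$-vertices (which lie at the strip endpoints, hence are fixed by the retraction). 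I expect this to be the technical heart of the proof; everything else is routine.
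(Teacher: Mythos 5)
The central step of your plan---replacing each isolated segment by a two-triangle strip glued to the rest of $\complex$ at its endpoints---does not preserve embeddability, and this happens exactly at the point you flag as the technical heart. An isolated segment is homeomorphic to an interval, so the only parts of $G$ it can carry are pieces of paths; a strip is a $2$-dimensional disk, which accommodates any planar graph. Concretely, take $\complex$ to be a single isolated segment $pq$ with $f(p)=u$ and $f(q)=v$, and $G=K_4$ with $u,v\in V(G)$: then $G$ does not embed in $\complex$ respecting $f$, but it does embed in the thickened complex, so the answer flips from no to yes. The ``mapping-cylinder retraction'' you invoke for the backward direction collapses a disk onto a segment and therefore cannot preserve injectivity of an arbitrary embedding. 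A similar problem affects your treatment of singular nodes $p$ with $f(p)=\varepsilon$: deleting $p$ \emph{together with all incident simplices} removes entire triangles that an embedding avoiding $p$ could legitimately use (take two triangles sharing only the node $p$, with $G$ two disjoint $3$-cycles: the answer again flips). The correct operation is to split $p$ into one copy per incident cone, corner, or isolated segment while keeping all simplices---what the paper calls the \emph{withdrawal} of $p$; your alternative phrasing ``remove a small open neighborhood'' is close to this in spirit, but it is not equivalent to deleting the incident simplices.

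The paper's proof goes in the opposite direction: instead of adding $2$-dimensional material, it \emph{removes} the isolated segments and simultaneously removes from $G$ exactly the pieces that such a segment could have carried. After dissolving the degree-two vertices of $G$ not in the image of $f$, the interior of an isolated segment $pq$ with $f(p)=u$ and $f(q)=v$ can only contain the edge $uv$ (if present), pendant edges $ux$ or $vy$ to degree-one vertices outside the image of $f$, or whole path components of $G$; each case is handled by deleting the segment together with a matching piece of $G$ (or replacing the segment by a fresh free segment when stray components of $G$ might still need a home). This graph-side bookkeeping is what your plan is missing; without it there is no way to make an isolated segment disappear, or turn it into something pure, while preserving the answer. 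Your step (ii), encoding $f$ as a partial embedding of a vertex subgraph and rejecting non-injective $f$ outright, is fine and matches what the paper does implicitly.
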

\begin{proof}
\ifarxiv 
The proof is a bit long, and we refer to Appendix~\ref{A:impure}, but the idea is simple:  
\else
We omit the proof due to space constraints, but the idea is simple:
\fi
Because we restrict ourselves to embeddings that respect~$f$, and thus specify which vertex of~$G$ is mapped to each of the singular nodes of~$\complex$, what happens on the isolated segments of~$\complex$ is essentially determined.
\end{proof}

Before giving the proof, we first note:
\begin{proof}[Proof of Proposition~\ref{P:pure}]
  It follows immediately from Lemmas~\ref{L:guess-pure} and~\ref{L:reduc-pure}.
\end{proof}

\section{Reduction to an EEP on a possibly disconnected surface}\label{S:pure}
The previous section led us to an embedding extension problem in a pure 2-complex without 3-book where the images of some vertices are predetermined.  Now, we show that solving such an EEP amounts to solving another EEP in which the complex is a surface.  

Let \emphdef{\EEPSurf} be the problem \EEP, restricted to instances where the input complex is (homeomorphic to) a possibly disconnected triangulated surface without boundary (which we denote by~$\surf$ instead of~$\complex$, for clarity).  To represent the embedding~$\Pi$ in such an EEP instance $(G,H,\Pi,\surf)$, it will be convenient to use the fact that, in all our constructions below, the image of every connected component of~$H$ under~$\Pi$ will intersect the 1-skeleton of~$\surf$ at least once, and finitely many times.  (Note that $H$ may use some nodes of~$\surf$.)  Consider the \emph{overlay} of the triangulation of~$\surf$ and of~$\Pi$, the union of the 1-skeleton of~$\surf$ and of the image of~$\Pi$; this overlay is the image of a graph on~$\surf$; each of its edges is either a piece of the image of an edge of~$H$ or a piece of a segment of~$\surf$; each of its vertices is the image of a vertex of~$H$ and/or a node of~$\surf$.  By the assumption above on~$\Pi$, this overlay is cellularly embedded on~$\surf$, and we can represent it by its combinatorial map~\cite{l-gem-82,e-dgteg-03} (possibly on surfaces with boundary, since at intermediary steps of our construction we will have to consider such surfaces).

In this section, we prove the following proposition.
\begin{proposition}\label{P:surf}
 Any instance of \EEPSing$(n,m,c)$ reduces to an instance of \EEPSurf$(O(n+m+c),O(m+c),O(c))$.
\end{proposition}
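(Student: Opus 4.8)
\textbf{Proof proposal for Proposition~\ref{P:surf}.}

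The plan is to start from an instance $(G,H,\Pi,\complex)$ of \EEPSing, where $\complex$ is a pure 2-complex without 3-books, $H$ is a set of vertices of~$G$, and $\Pi$ maps~$H$ injectively to nodes of~$\complex$ covering all singular nodes. The key structural fact is that, since $\complex$ is pure and 3-book-free, the link of every node is a disjoint union of arcs and circles, so locally $\complex$ looks either like an open disk (at a point interior to a cone) or like several half-disks and disks glued at a point (at a singular node); moreover every segment is incident to exactly one or two triangles. I would first \emph{cut $\complex$ open along its singular nodes}: replace each singular node~$p$ by one copy per corner and per cone incident to~$p$ (and, since we are in the restricted setting where $\Pi$ already covers all singular nodes and specifies the vertex of~$G$ sitting on~$p$, we attach to the node~$p$ a small gadget recording which corners/cones the edges of~$G$ incident with~$\Pi^{-1}(p)$ are allowed to enter). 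After this cutting operation, every node becomes regular, i.e.\ $\complex$ becomes a disjoint union of \emph{surfaces with boundary}: the boundary consists of the free segments (those incident to a single triangle) and of the segments bounding the ``half-disk'' corners.

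Next I would \emph{cap off the boundary}: glue a triangulated disk along each boundary component to obtain a possibly disconnected closed surface~$\surf$. An embedding of~$G$ in~$\complex$ that respects the prescribed placement at the singular nodes corresponds bijectively to an embedding of a suitably modified graph~$G'$ in~$\surf$ that respects a corresponding partial embedding~$\Pi'$: here $H'$ should be taken to be (the image under the cutting map of)~$H$, together with the boundary cycles we added and the small gadget cycles around former singular nodes, all with their prescribed embedding in~$\surf$; these extra subgraphs fence off the capping disks and enforce the corner/cone constraints, so that any extension of~$\Pi'$ to~$G'$ stays in the part of~$\surf$ coming from~$\complex$ and respects the placement at singular nodes. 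Conversely, deleting the capping disks from any such extension yields an embedding in~$\complex$. The sizes are linear: each of the $O(c)$ simplices contributes $O(1)$ new simplices and $O(1)$ new vertices/edges to $H'$, and each of the $O(m)$ prescribed vertices contributes a gadget of size $O(1)$ but possibly with degree up to the number of edges of~$G$ incident to it, which is why $G'$ has $O(n+m+c)$ vertices and edges, $H'$ has $O(m+c)$, and $\surf$ has $O(c)$ simplices.

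The main obstacle, and where most of the care is needed, is the bookkeeping at the singular nodes: when the vertex $v=\Pi^{-1}(p)$ has several incident edges of~$G$, an embedding may route different edges through different corners or cones at~$p$, and after cutting~$\complex$ open these edges end up on \emph{different} components or different boundary arcs of~$\surf$. So the gadget around each former singular node must (i) keep all the copies of~$p$ linked together so that the edges of~$G$ at~$v$ can be distributed among them in an arbitrary cyclic pattern, yet (ii) still be part of the \emph{pre-embedded} subgraph~$H'$, so that $\Pi'$ is a genuine embedding of a fixed graph. I would realize this by a small ``hub'' construction: for each corner/cone copy of~$p$, add a pendant triangle fan in the corresponding capping disk and connect~$v$ to the appropriate copy; since $\Pi'$ need only be an embedding of~$H'$ and $G'\setminus H'$ still contains all of~$G$'s edges (now attached to the hub), the freedom in distributing edges of~$G$ among the copies of~$p$ is preserved while the pre-embedded part is rigid. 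Verifying that this correspondence is a bijection (both directions) and that the face structure of the overlay of $\Pi'$ with the triangulation of~$\surf$ is cellular, so that $\Pi'$ admits the promised combinatorial-map representation, is the routine but lengthy part; the genus and number of components of~$\surf$ are read off directly from the cone/corner decomposition of~$\complex$ and need not be computed explicitly.
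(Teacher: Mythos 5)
There is a genuine gap, and it is exactly at the place you flag as ``the main obstacle'': the decision to \emph{cut $\complex$ open} at each singular node~$p$, creating one copy of~$p$ per cone and per corner. After this cut the cones and corners at~$p$ may lie in different connected components of the resulting surface (e.g.\ two tori glued at a point become two disjoint tori), or at least in regions that are no longer adjacent. But the vertex $v=\Pi^{-1}(p)$ of~$G$ must be embedded at a single point, and an embedding of~$G$ into~$\complex$ may route different edges incident to~$v$ into different cones/corners at~$p$; which edges go where is not known in advance, and guessing the partition would cost roughly $(c_p+c'_p)^{\deg v}$ instances, which is not polynomial. Your ``hub'' gadget cannot repair this: it is a modification of the \emph{graph} (pendant fans inside the capping disks), whereas what is broken is the connectivity of the \emph{target space} --- no pre-embedded subgraph $H'$ can make a single vertex of $G'$ adjacent, via embedded edges, to copies of~$p$ sitting on distinct components of~$\surf$, nor can it recreate the cyclic freedom of distributing the edges at~$v$ among the cones and corners while remaining a rigid, fixed embedding.

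The paper's construction (Lemma~\ref{L:surf-bound}, Figure~\ref{F:thickening}) avoids cutting: it removes a small neighborhood $N_p$ of~$p$, whose boundary is $c_p$ circles and $c'_p$ arcs, and glues all of these to the boundary components of a \emph{single} sphere $S_p$ with $c_p+c'_p$ holes, so the space stays connected through~$S_p$ exactly as it was connected through~$p$. The vertex $v_p$ is placed in the interior of~$S_p$, and the extra power of the sphere is neutralized by adding to~$H'$ a pre-embedded loop around each boundary component of~$S_p$ (plus one pendant edge per corner), all based at~$v_p$: since edges of $G'\setminus H'$ cannot cross these loops, anything entering $S_p$ through one boundary component can only reach another by passing through~$v_p$, which is precisely the behaviour of the original pinch point. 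This is the idea your proposal is missing. Your second step (capping boundary components with disks) is close in spirit to the paper's Lemma~\ref{L:surf nobound}, though note that the paper fences off each capping disk with a star centered at a new vertex $v_b$ joined to the $H$-vertices on that boundary component, rather than with a pre-embedded boundary cycle, which would need care since the boundary of the surface-with-boundary need not be covered by~$H$.
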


We will first reduce the original EEP to an intermediary EEP on a surface with boundary.

\begin{lemma}\label{L:surf-bound}
Any instance of \EEPSing$(n,m,c)$ reduces to an instance of \EEP$(n+O(c),m+O(c),O(c))$ in which the considered 2-complex is a possibly disconnected surface with boundary.
\end{lemma}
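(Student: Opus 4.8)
The plan is to cut $\complex$ open at each of its singular nodes and to insert there a small triangulated disk that will play the role of the pinned vertex, while letting the input graph route its incident edges freely into the various cones and corners at that node. I first record the only structural fact needed: since $\complex$ is pure and has no 3-books, every segment lies in one or two triangles, and every point of $\complex$ other than a singular node has a neighbourhood homeomorphic to $\R^2$ or to a closed half-plane (this is immediate for interior points of triangles and of segments, and is the definition for regular nodes). Thus $\complex$ fails to be a surface with boundary exactly at its singular nodes $p_1,\dots,p_r$; by hypothesis $p_j=\Pi(v_j)$ for some vertex $v_j$ of $H$, and the segments and triangles at $p_j$ partition into cones and corners $C^j_1,\dots,C^j_{k_j}$. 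Since the cones/corners at $p_j$ use pairwise disjoint, non-empty sets of segments incident to $p_j$, we have $\sum_j k_j = O(c)$.

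The surgery at a fixed singular node $p=p_j$ (with $v=v_j$ and cones/corners $C_1,\dots,C_k$) goes as follows. In each $C_i$, the node $p$ has a small neighbourhood $N_i$ inside $C_i$, a disk if $C_i$ is a cone and a half-disk (with $p$ on its straight side) if $C_i$ is a corner. I delete $\mathrm{int}(N_i)$ for every $i$; this removes the point $p$ and leaves, for each $i$, a fresh free boundary arc or circle where $\partial N_i$ was. Then I attach a new triangulated disk $D_p$ with a single interior node $w$ and $k$ further boundary nodes $z_1,\dots,z_k$, joined to $w$ by interior-disjoint paths $wz_1,\dots,wz_k$ of the triangulation that split $D_p$ into $k$ sub-disks $D_p^1,\dots,D_p^k$, with $D_p^i$ bounded by $wz_i$, $wz_{i+1}$ and an arc $A_i\subseteq\partial D_p$; I glue $A_i$ to an arc of the boundary newly created in $C_i$. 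Finally I rename $v$ to $w$ in $G$, and add $w,z_1,\dots,z_k$ and the edges $wz_i$ to $H$ (the $z_i$ of degree $1$), extending $\Pi$ so that these are embedded in $D_p$ as above. Carried out at every singular node, this yields a possibly disconnected triangulated surface with boundary $\surf$ with $O(c)$ simplices, a graph $G'$ with $n+O(c)$ vertices and edges, a subgraph $H'$ with $m+O(c)$ vertices and edges, and an embedding $\Pi'$ of $H'$ in $\surf$; this is the instance of \EEP{} I output. Verifying that $\surf$ is indeed a surface with boundary is a local check at the nodes of the $D_p$'s and along the new boundary curves; the corner case is handled by the same construction with ``disk'' replaced by ``half-disk''.

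It then remains to prove that $G$ embeds into $\complex$ extending $\Pi$ if and only if $G'$ embeds into $\surf$ extending $\Pi'$. For the forward direction I start from an embedding $\Gamma$ of $G$ with $\Gamma(v)=p$: the edges of $G$ at $v$ leave $p$ into the cones/corners, which partitions them among $C_1,\dots,C_k$; excising $\Gamma$ from each $N_i$ and rerouting the edges of the $i$-th part so that they emanate from $w$, cross the disk $D_p^i$, exit through $A_i$, and continue unchanged, produces an embedding of $G'$ in $\surf$ extending $\Pi'$. For the backward direction I start from an embedding $\Gamma'$ of $G'$ in $\surf$ extending $\Pi'$: since the only vertex of $G'$ in any $D_p$ is $w$ and the arcs $wz_i$ separate the $D_p^i$, every edge of $G'$ at $w$ enters exactly one $D_p^i$ and leaves through $A_i$, while every other edge meeting $D_p$ does so only in ``detour'' arcs with both ends on one $A_i$; an innermost-arc argument pushes those detours out of the $D_p$'s without introducing crossings, after which collapsing each $D_p$, together with the corresponding $N_i$'s, back to the point $p$ and routing the surviving edge-ends radially through the $N_i$'s turns $\Gamma'$ into an embedding of $G$ in $\complex$ extending $\Pi$.

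I expect the backward direction to be the main obstacle: one has to be certain that the walls $wz_i$ and the punctures genuinely confine $G'$ near each gadget, so that no edge of $G'$ can leak from one cone/corner at $p$ to another through $D_p$ (such an embedding of $G'$ would have no counterpart in $\complex$, since there the edge would be forced through $p=\Gamma(v)$ whenever those two cones lie in different components of $\complex\setminus\{p\}$), and that the detour arcs can always be removed. Making the gadgets simplicial and checking the surface structure along the new boundary curves, including the corner case, is routine but needs care.
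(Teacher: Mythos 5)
Your construction is sound and reaches the right statement, but the gadget you build is genuinely different from the paper's. Where you excise a neighborhood of each singular node $p$ and glue in a single disk $D_p$, subdivided by spokes $wz_1,\dots,wz_k$ of $H'$ into sectors $D_p^i$ each attached along an arc $A_i$ to the boundary created in the cone or corner $C_i$, the paper instead glues in a \emph{sphere $S_p$ with $c_p+c'_p$ boundary components}, attaching each cone's circle to an entire boundary component and each corner's arc to part of one; the vertex $v_p$ sits in the interior of $S_p$, and the confinement role played by your spokes is played there by a set $L_p$ of loops of $H'$ at $v_p$, one enclosing each boundary component of $S_p$, plus one pendant edge of $H'$ per corner pinning a point of the corresponding boundary circle. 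Both gadgets realize the same idea (thicken the singular point into a $2$-dimensional region, pin the prescribed vertex there, and add barriers to $H$ so that $G$ cannot leak between cones/corners except through that vertex), and both leave some free boundary, which is harmless. Your disk is arguably the more economical gadget; the paper's sphere-with-holes keeps each cone glued along a full circle, which makes the local picture around a cone identical to the original one and slightly simplifies the isotopy arguments. Note also that the paper explicitly allows an arbitrary orientation for each gluing; in your construction the orientations of the arc gluings can likewise change the global topology of $\surf$ (e.g.\ torus versus Klein bottle in a pinched-sphere example), and one should say, as the paper does, that this choice is immaterial because the equivalence argument only uses the local structure enforced by $H'$.

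One point in your backward direction is stated too strongly and needs repair. You assert that every edge of $G'$ not incident to $w$ meets $D_p$ only in detour arcs with both endpoints on a single $A_i$. This is not automatic: a vertex of $G'\setminus H'$, or an entire subgraph (even a whole planar connected component of $G'$), may be embedded in the interior of a sector $D_p^i$, attached to the rest of $G'$ only through arcs crossing $A_i$ or not at all; such pieces are crushed onto $p$ when you collapse $D_p$, so they must be evacuated first, and an ``innermost detour arc'' induction does not see them. The fix is the one the paper uses for the analogous situation (the planar subgraph trapped in the central disk of $S_p$, and the ambient isotopies of Figure~\ref{F:homeo}): since $D_p^i\cup(\text{a collar of }\alpha_i\text{ in }C_i\setminus N_i)$ is a disk meeting $H'$ only along its boundary, there is an ambient isotopy of $\surf$ fixing $H'$ that compresses everything in $D_p^i$ other than the initial segments of the edges at $w$ into that collar. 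A similar (minor) adjustment is needed in your forward direction: the edges of $\Gamma$ cross $\partial N_i$ at points that need not lie on the sub-arc glued to $A_i$, so before rerouting through $D_p^i$ you must first isotope them within $C_i\setminus N_i$ so that they all cross into the gadget through that sub-arc. With these repairs your proof is complete.
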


\begin{figure}
\centering
\includegraphics[width=.8\linewidth]{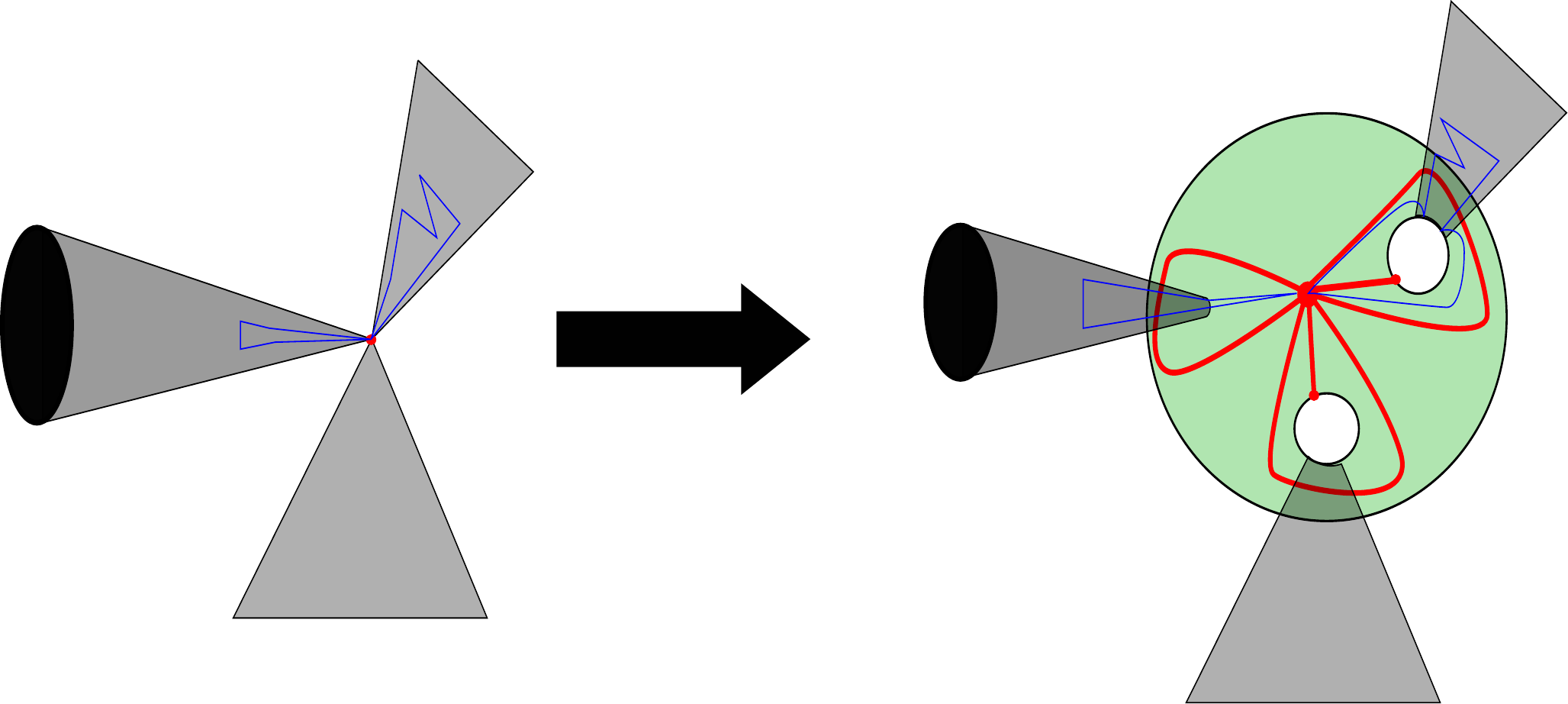}
\caption{The modification of singular vertices in the proof of
  Lemma~\ref{L:surf-bound}.  We transform the neighborhood of
  each singular vertex to make it surface-like.  Moreover,
  we add to~$H$ one loop per cone or corner; furthermore, for each corner, we add a vertex and an edge.}
\label{F:thickening}
\end{figure}

\begin{proof}
  The key property that we will use is that, since $\complex$ is pure and contains no 3-books, each singular node is incident to cones and corners only.
  
  Figure~\ref{F:thickening} illustrates the proof.  Let $(G,H,\Pi,\complex)$ be the instance of~\EEPSing.  We first describe the construction of the instance $(G',H',\Pi',\surf)$ on the possibly disconnected surface with boundary.  Let $p$ be a singular node; we modify the complex in the neighborhood of~$p$ as follows.  Let $c_p$ be the number of cones at~$p$ and $c'_p$ be the number of corners at~$p$.  We remove a small open neighborhood~$N_p$ of~$p$ from~$\complex$, in such a way that the boundary of~$N_p$ is a disjoint union of $c_p$ circles and $c'_p$ arcs.  We create a sphere~$S_p$ with $c_p+c'_p$ boundary components.  Finally, we attach each circle and arc to a different boundary component of~$S_p$, choosing an arbitrary orientation for each gluing; circles are attached to an entire boundary component of~$S_p$, while arcs cover only a part of a boundary component of~$S_p$.  Doing this for every singular node~$p$, we obtain a surface (possibly disconnected, possibly with boundary), which we denote by~$\surf$. 

  We now define $H'$, $G'$, and~$\Pi'$ from $H$, $G$, and~$\Pi$ (again, refer to Figure~\ref{F:thickening}).  Let $p$ be a singular node of~$\complex$ and $v_p$ the vertex of~$H$ mapped on~$p$ by~$\Pi$.  In $H$ (and thus also~$G$), we add a set~$L_p$ of $c_p+c'_p$ loops with vertex~$v_p$.  Let $q_p$ be a point in the interior of $S_p$; in~$\Pi$, we map $v_p$ to~$q_p$, and we map these $c_p+c'_p$ loops on~$S_p$ in such a way that each loop encloses a different boundary component of~$S_p$ (thus, if we cut $S_p$ along these loops, we obtain $c_p+c'_p$ annuli and one disk).

  Finally, we add to~$H$ (and thus also to~$G$) a set~$E_p$ of $c'_p$ new edges, each connecting $v_p$ to a new vertex.  In $\Pi$, each new vertex is mapped to the boundary component of~$S_p$ corresponding to a corner, but not on the corresponding arc.

  Let us call $G'$ and~$H'$ the resulting graphs, and~$\Pi'$ the resulting embedding of~$H'$.  Note that, from the triangulation of~$\complex$ with $c$ simplices, we can easily obtain a triangulation of~$\surf$ with $O(c)$ simplices, and that the image of each edge of~$H$ crosses $O(1)$ edges of this triangulation.
  
  \ifarxiv There remains to prove that these two \EEP{}s are equivalent; see Appendix~\ref{A:pure}.
  \else There remains to prove that these two \EEP{}s are equivalent; we omit the details.
  \fi
\end{proof}

We now deduce from the previous EEP the desired EEP on a surface without boundary.
\begin{lemma}\label{L:surf nobound}
 Any instance of \EEPSing$(n,m,c)$ on a possibly disconnected surface with boundary reduces to an instance of \EEPSurf$(n+O(m),O(m),O(c))$.
\end{lemma}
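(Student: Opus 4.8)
The plan is to cap every boundary component of~$\surf$ by a triangulated disk, turning $\surf$ into a closed surface, after first adding to~$H$ a cycle running along each boundary component so that no part of $G-H$ can leak into the new disks.

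\emph{Construction.} Recall that, in the instance at hand (the one produced by Lemma~\ref{L:surf-bound}), the subgraph~$H$ meets $\partial\surf$ only at the degree-one vertices introduced together with the edge sets~$E_p$; in particular no vertex of~$H$ on~$\partial\surf$ is incident to an edge of $G-H$. For each boundary component~$b$ of~$\surf$: if $\Pi(H)\cap b=\varnothing$, add to~$G$ and~$H$ a new vertex mapped by~$\Pi$ to some node of~$\surf$ on~$b$; then let $\alpha_1,\dots,\alpha_s$ be the closures of the components of $b\setminus\Pi(H)$ --- each $\alpha_j$ has its two endpoints in~$\Pi(V(H))$ --- and for every~$j$ add to~$G$ and~$H$ a new edge joining those two vertices, routed in~$\Pi$ along~$\alpha_j$. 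After this, $H$ contains for each~$b$ a closed walk~$C_b$ with $\Pi(C_b)=b$, and every vertex and edge we have added lies in~$H$ and is incident to no edge of $G-H$. Finally glue a disk~$D_b$ to each~$b$, choose any triangulation of~$D_b$ extending the induced triangulation of~$b$, and let~$\surf'$ be the resulting closed, possibly disconnected, triangulated surface; then $\surf'$ has $O(c)$ simplices. The map~$\Pi$ (unchanged) is an embedding of the enlarged~$H$ into~$\surf'$, and it stays cellular when overlaid with the new triangulation, so we obtain an \EEPSurf{} instance $(G',H',\Pi',\surf')$; a routine count gives the stated sizes (each boundary component contributes $O(1)$ new vertices and edges to the graph).

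\emph{Equivalence.} If $\Pi$ extends to an embedding~$\Gamma$ of the original~$G$ on~$\surf$, then since $G-H$ meets $\partial\surf$ in only finitely many points, each having a half-plane neighbourhood, a local perturbation pushes $\Gamma(G-H)$ off~$\partial\surf$ altogether; each~$b$ is then free except at the finitely many vertices of~$H$ on it, so we can lay the edges of~$C_b$ along~$b$ and obtain an embedding of~$G'$ on $\surf\subseteq\surf'$ extending~$\Pi'$. Conversely, let $\Gamma'$ extend~$\Pi'$ to an embedding of~$G'$ on~$\surf'$. Then $\Gamma'(C_b)=b$, so $\Gamma'(G'-H')$ is disjoint from every~$b$; hence each connected component of $G'-V(H')$, together with its edges to~$V(H')$, is mapped entirely into~$\surf$ or entirely into one disk~$\overline{D_b}$. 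In the latter case all its edges to~$V(H')$ would reach vertices of~$H'$ on~$b$, which is impossible because those vertices carry no such edge --- unless the component is a whole connected component of~$G$ avoiding~$V(H)$, which is planar and can simply be re-drawn in any face of the remaining embedding in~$\surf$. Therefore $\Gamma'$ may be taken to send~$G$ into~$\surf$, and this gives an embedding of~$G$ on~$\surf$ extending~$\Pi$.

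The crux is the converse direction: the claim that nothing of $G-H$ can be hidden inside a capping disk relies entirely on the fact that, for the instances we are dealing with, the vertices of~$H$ lying on~$\partial\surf$ (the endpoints of the~$E_p$ edges, together with the markers we just added) have no edges outside~$H$; without such a property, merely capping each boundary component by a disk would change the answer, since the new disks would give~$G$ genuinely extra room. The remaining ingredients --- the local perturbation in the forward direction, the relocation of free components, the size bookkeeping, and the polynomial-time computability of the combinatorial map of the new overlay --- are routine.
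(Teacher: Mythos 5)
There is a genuine gap: your entire converse direction rests on the premise that, in the instances produced by Lemma~\ref{L:surf-bound}, the only vertices of~$H$ mapped to~$\partial\surf$ are the degree-one endpoints of the $E_p$-edges, so that no vertex of~$H$ on the boundary is incident to an edge of $G-H$. That premise is false. In an \EEPSing{} instance, $\Pi(H)$ must \emph{contain} the singular nodes but may also contain regular nodes; indeed, the reduction of Lemma~\ref{L:reduc-pure} produces instances where an originally singular node (say, one incident to a corner plus an isolated segment) becomes a regular corner apex after the isolated segment is removed, yet remains in the domain of~$f$. Such a node lies on a boundary component of the surface built in Lemma~\ref{L:surf-bound} (only \emph{singular} nodes are excised and replaced by the spheres~$S_p$; regular nodes are untouched), and the vertex of~$G$ mapped there can have arbitrarily many incident edges in $G-H$. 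Note also that the lemma as stated is about an arbitrary EEP on a surface with boundary, not only those arising from Lemma~\ref{L:surf-bound}.

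Once the premise fails, your construction is no longer an equivalence: covering a boundary component~$b$ entirely by edges of~$H$ and then capping with a disk~$D_b$ gives~$G$ genuinely extra room as soon as two or more such vertices lie on~$b$. For instance, with $u_1,u_2,u_3,u_4$ in this cyclic order on the boundary of a disk~$\surf$ and $G$ containing the edges $u_1u_3$ and $u_2u_4$, the two chords cannot both be drawn in~$\surf$, but after capping one of them can be drawn inside~$D_b$ without crossing~$C_b$; such configurations do arise from \EEPSing{}. The paper's gadget is different and degree-agnostic: it places a hub vertex~$v_b$ in the \emph{interior} of~$D_b$, joined by spokes to each $H$-vertex on~$b$, and leaves the arcs of~$b$ between consecutive $H$-vertices \emph{uncovered}. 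The spokes cut~$D_b$ into sectors, each having a free arc of~$b$ on its boundary, so an ambient isotopy fixing~$H'$ pushes whatever is drawn in a sector out across that free arc into~$\surf$. Your forward direction, size accounting, and handling of components of~$G$ disjoint from~$H$ are fine, but the converse needs the paper's (or some similarly degree-insensitive) construction.
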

\begin{proof}
  \ifarxiv (Figure~\ref{F:remove-boundary}, in Appendix, illustrates the proof.) \fi
  Let $(G,H,\Pi,\surf)$ be an instance of an EEP on a possibly disconnected surface with boundary.  We first describe the construction of $(G',H',\Pi',\surf')$, the EEP instance on a possibly disconnected surface without boundary.
  
  Let $\surf'$ be obtained from~$\surf$ by gluing a disk $D_b$ along each  boundary component.  Let $b$ be a boundary component of $\surf$.  If $\Pi$ maps at least one vertex to~$b$, then we add to $H$ (and thus also to $G$) a new vertex~$v_b$, which we connect, also by a new edge, to each of the vertices mapped to~$b$ by~$\Pi$.  We extend~$\Pi$ by mapping vertex~$v_b$ and its incident edges inside~$D_b$.  Let us call $G'$ and~$H'$ the resulting graphs, and~$\Pi'$ the resulting embedding of~$H'$. For each vertex of~$H$ on a boundary component, we added to $H$ and $G$ at most one vertex and one edge.
  \ifarxiv There remains to prove that these two \EEP{}s are equivalent; see Appendix~\ref{A:pure}.
  \else There remains to prove that these two \EEP{}s are equivalent; we omit the details.
  \fi
\end{proof}
Finally:
\begin{proof}[Proof of Proposition~\ref{P:surf}]
It suffices to successively apply  Lemmas \ref{L:surf-bound} and \ref{L:surf nobound}. 
\end{proof}

\section{Reduction to a cellular EEP on a surface}\label{S:cellularise}
Let \emphdef{\EEPCell} be the problem \EEP, restricted to instances $(G,H,\Pi,\surf)$ where $\surf$ is a surface and $H$ is cellularly embedded and intersects each connected component of~$G$.

In this section, we prove the following proposition.

\begin{proposition}\label{P:cell}
  Any instance of \EEPSurf$(n,m,c)$ reduces to $(n+m+c)^{O(m+c)}$ instances of \EEPCell$(O(n+m+c),O(n+m+c),c)$.
\end{proposition}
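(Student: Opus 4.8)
The plan is to start with an instance $(G,H,\Pi,\surf)$ of \EEPSurf{} and reduce to the case where $\Pi$ embeds $H$ cellularly and $H$ meets every component of~$G$. Working one component of~$\surf$ at a time, consider the faces of the embedding~$\Pi$ of~$H$ in~$\surf$: each such face is a subsurface of~$\surf$ with boundary, of some genus, some number of boundary components, and each boundary walk carries a combinatorial structure coming from~$H$. The first step is a guessing step. If the answer to the EEP is positive, witnessed by an embedding~$\Gamma$ of~$G$ extending~$\Pi$, then inside each face~$F$ of~$\Pi$ the portion of~$G$ drawn in~$F$ is embedded, and we may look at how it sits topologically inside~$F$. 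The key observation is that we can augment $H$ by a small amount of extra combinatorial data in each face so as to make the enlarged subgraph cellular, without affecting embeddability: concretely, in each non-cellular face~$F$ we add a system of arcs and loops (a ``cut system'') that cuts $F$ into disks, with endpoints on $\partial F$, and we guess the combinatorial position of these arcs relative to the part of~$G$ inside~$F$. Since $F$ has genus and boundary count bounded by~$O(c)$ (because $\surf$ has $O(c)$ simplices and $H$ has $O(m)$ edges, so by Euler's formula the total complexity of the face structure is $O(m+c)$), only $O(m+c)$ new edges and vertices per face are needed, and the number of combinatorial guesses for how these arcs interleave with the (bounded-degree-on-the-boundary) trace of~$G$ is $(n+m+c)^{O(m+c)}$.

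More precisely, for each face~$F$ of~$\Pi$, I would proceed as follows. First add to~$H$ (and to~$G$) a new vertex $w_F$ in the interior of~$F$, connected by a new edge to one vertex on each boundary walk of~$F$; this makes every face incident to a single connected piece of~$H$ and, in particular, ensures $H$ meets every component of~$G$ once we also handle components of~$G$ disjoint from~$H$ (for those, attach a pendant edge from a new vertex of~$H$ placed generically, in every possible face — another bounded guess). Next, to make the face a disk, add a system of at most $2g_F + b_F - 1$ loops/arcs based at~$w_F$ cutting $F$ into a single disk; the images of these arcs are guessed only combinatorially, namely their cyclic order around~$w_F$ and the pattern of crossings with the trace of $E(G)\setminus E(H)$ along each boundary component of~$F$ and along each other new arc. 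There are at most $(n+m+c)^{O(m+c)}$ such combinatorial patterns in total. For each guess we obtain an instance $(G',H',\Pi',\surf)$ where $H'$ is cellular and intersects every component of~$G'$, of the stated sizes $O(n+m+c)$, with the same surface (so the parameter $c$ of the triangulation is unchanged). Finally I would argue the equivalence: a positive answer to any guessed \EEPCell{} instance gives, by forgetting the added vertices, arcs, and loops, a positive answer to the original \EEPSurf{} instance; conversely, a positive answer to the original instance, witnessed by some~$\Gamma$, determines inside each face of~$\Pi$ an embedded piece of~$G$, and one can always draw the cut system of arcs for that face avoiding that piece except for a combinatorially prescribed crossing pattern (this uses only that a subsurface with boundary minus an embedded graph still admits a cut system into disks, and that we enumerated all crossing patterns), so at least one guess succeeds.

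The main obstacle is the equivalence argument in the backward direction, and in particular pinning down exactly what ``combinatorial data'' about the cut arcs needs to be guessed so that (a) the number of guesses stays $(n+m+c)^{O(m+c)}$, and (b) the guessed data really does force the resulting $H'$ to be cellular and pins down~$\Pi'$ up to ambient isotopy. One has to be careful that the cut arcs may need to cross edges of~$G\setminus H$ many times — but the number of \emph{relevant} crossings (those that change the isotopy class of the arc relative to $G\cup H$) can be bounded in terms of the complexity of the face and of the trace of~$G$ on its boundary, which is where the exponent $O(m+c)$ and the base $n+m+c$ come from; making this bound precise, and checking that a normal-form/minimal-crossing representative of each arc has this bounded complexity, is the delicate part. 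A secondary subtlety is handling components of~$G$ entirely inside a single face of~$\Pi$: each must be attached to~$H'$, and which face it lands in is itself a guess, contributing another factor absorbed into $(n+m+c)^{O(m+c)}$.
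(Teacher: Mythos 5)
There is a genuine gap, and it sits exactly where you flag ``the delicate part.'' Your plan is to add, in each non-cellular face~$F$, a cut system of \emph{new} arcs based at a new vertex $w_F$, put these arcs into $H'$ with a guessed combinatorial position, and argue in the backward direction that a solution~$\Gamma$ of the original instance lets you draw the cut system ``avoiding that piece except for a combinatorially prescribed crossing pattern.'' But in the EEP formalism the cut arcs become edges of $G'$, and $G'$ must be \emph{embedded}: the arcs are not allowed to cross $\Gamma(G)\setminus\Pi(H)$ at all. If the part of~$G$ that $\Gamma$ places inside~$F$ fills up~$F$ (e.g.\ is cellularly embedded in~$F$), then \emph{every} arc that kills the genus of~$F$ or joins two of its boundary components must cross it, so no choice of crossing pattern rescues the construction --- and allowing genuine crossings would force you to subdivide edges of~$G$ at unknown points and reroute, which is no longer the reduction you describe and whose guess count you have not bounded. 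So the backward direction of your equivalence fails as stated, independently of whether the enumeration of ``relevant crossings'' can be made to work.

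The paper avoids this obstacle by making the cut paths out of the graph~$G$ itself. It fixes a spanning forest~$F$ of $G-E(H)$ rooted at $V(H)$, and for a deficient face it guesses two vertices $u,v$ of~$G$ and adds a \emph{single} new edge $uv$; the path $F(uv)$ (forest path from $u$ to its root, edge $uv$, forest path from $v$ to its root) is then adjoined to~$H$. Since $F(uv)$ consists of edges of~$G$ plus one new edge, in any solution~$\Gamma$ it is already embedded except for $uv$, and a 3-path-condition argument (Lemma~\ref{L:3paths}) shows that for \emph{some} consecutive pair of vertices met by an essential arc in the face, the resulting $F(uv)$ joins two boundary components (or is non-null-homologous). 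A homeomorphism-of-the-face argument then shows that only the occurrences of the roots on $\partial f$ and the topological type of the path need to be guessed, giving $O(n^4)$ sub-instances per step; the cellularity defect $\sum_f \mathrm{genus}(f)+\sum_f(\#\text{boundaries of }f-1)$, which is $O(m+c)$ initially, strictly decreases each step, yielding the bound $(n+m+c)^{O(m+c)}$. If you want to salvage your write-up, the fix is to replace your free-floating cut system by paths anchored in~$G$ in this way; your preliminary step of guessing a face for each component of~$G$ disjoint from~$H$ does match the paper's Lemma~\ref{L:connected}, except that the paper first discards the planar such components so that at most~$c$ of them remain.
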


As will be convenient also for the next section, we do not store an embedding~$\Pi$ of a graph~$G$ on a surface~$\surf$ by its overlay with the triangulation, as was done in the previous section, but we forget the triangulation.  In other words, we have to store the combinatorial map corresponding to~$\Pi$, but taking into account the fact that $\Pi$ is not necessary cellular:  We need to store, for each face of the embedding, whether it is orientable or not, and a pointer to an edge of each of its boundary components (with some orientation information).  Such a data structure is known under the name of \emph{extended combinatorial map}~\cite[Section~2.2]{cm-tgis-14} (only orientable surfaces were considered there, but the data structure readily extends to non-orientable surfaces).

\subsection{Reduction to connected surfaces}

We first build intermediary EEPs over connected surfaces. Let \emphdef{\EEPCon} be the problem \EEP, restricted to instances $(G,H,\Pi,\surf)$ where $\surf$ is a surface (connected and without boundary) and $H$ intersects every connected component of~$G$.

\begin{lemma}\label{L:connected}
Any instance of \EEPSurf$(n,m,c)$ reduces to $O(m^c)$ instances of \EEPCon$(n,m+c,c)$.
\end{lemma}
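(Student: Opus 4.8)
The plan is as follows. An instance of \EEPSurf{} consists of a graph $G$, a subgraph $H$, an embedding $\Pi$ of $H$ into a possibly disconnected surface $\surf$, with the property that $H$ intersects every connected component of $G$. Write $\surf=\surf_1\sqcup\cdots\sqcup\surf_k$ for the connected components of~$\surf$. Since $\surf$ has at most $c$ simplices, $k\le c$. The obstacle to directly splitting the problem into the $k$ independent sub-instances on each $\surf_i$ is that a connected component of~$G$ that is disjoint from~$H$ may be embedded into \emph{any} of the $\surf_i$, and we do not know which one in advance; also, distinct connected components of~$G$ that are pairwise disjoint from~$H$ could compete for room on the same~$\surf_i$. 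The first step is to deal with this by a guessing argument: Let $G_1,\ldots,G_r$ be the connected components of~$G$ that do not meet~$H$, and $G_0$ the union of the components that do meet~$H$. For each of the (at most) $k^r$ assignments $\sigma:\{1,\ldots,r\}\to\{1,\ldots,k\}$ of these free components to surface components, we will produce one independent family of instances. However, $r$ can be as large as~$n$, so $k^r$ is not of the form $O(m^c)$; I would instead observe that we only need to \emph{guess the partition induced on $\{1,\ldots,r\}$ by which $\surf_i$ each $G_j$ lands in}, but we can do even better: it suffices, for each $i$, to decide which connected components of~$G$ are sent into~$\surf_i$, and then run a single EEP on~$\surf_i$; and since $H$ meets every component of~$G$ except the free ones, and there are at most $k\le c$ components of~$\surf$, the relevant bookkeeping is over how the free components are distributed.

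To get the bound $O(m^c)$ rather than $k^r$, the key is that we do not need to guess $\sigma$ component-by-component. Instead, note that each component~$\surf_i$ of~$\surf$ must receive at least one component of~$H$ (in the reductions feeding into this lemma, $H$ intersects every component of~$\surf$—if not, such a ``free'' component $\surf_i$ of $\surf$ can simply be discarded, as nothing forces us to use it and any graph component placed there could equally be placed elsewhere; alternatively we merge it into a neighbor). The number of components of~$H$ is at most the number of edges-and-vertices of~$H$, i.e.\ at most~$m$, and also at most~$c$ since each uses a node of~$\surf$. For each component~$\surf_i$ we must decide which components of~$H$ are mapped to it—but this is already determined by $\Pi$! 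So the only real choice is for the free components $G_1,\ldots,G_r$ of~$G$. Here I would use that a free connected component embeds on $\surf_i$ if and only if it embeds on the component of smallest genus among those available, and more simply: we guess, for each of the $k\le c$ components~$\surf_i$, a \emph{single} free component of~$G$ that is ``representative'' — no, cleaner: we process the free components greedily is not safe either.

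The cleanest route, which I would take, is: first handle~$G_0$, whose every component meets~$H$, hence is pinned to a specific~$\surf_i$ by~$\Pi$; this gives $k\le c$ independent \EEP{} instances, one per~$\surf_i$, with no guessing. Then, separately, each free component $G_j$ of $G$ must be embedded in one of the (at most $c$) components $\surf_i$ \emph{with no interference from $H$}, since $H$ does not meet $G_j$; moreover two free components are mutually non-interfering only in the sense that they need disjoint regions, but since every $\surf_i$ is a surface and we may route a free component in an arbitrarily small disk in a face, any number of free components can coexist on a single~$\surf_i$ provided each individually embeds there. Thus $G$ embeds extending~$\Pi$ if and only if each component of~$G_0$ embeds on its pinned~$\surf_i$ (an \EEPCon{} instance) \emph{and} each free component~$G_j$ embeds on \emph{some}~$\surf_i$ (a plain embeddability question, reducible to \EEPCon{} with $H$ a single vertex, or absorbed by adding, for each free $G_j$, the guess of its host $\surf_i$). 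Guessing the host of each free component costs $k\le c$ choices per component, but since a free component's embeddability on $\surf_i$ depends only on the genus and orientability type of $\surf_i$, and there are at most $c$ such types, we need only $O(c)$ auxiliary \EEPCon{} instances total for all free components combined — not $k^r$. Combining: the number of \EEPCon{} instances produced is $k+O(c)\cdot(\text{number of distinct surface types})=O(c^2)$, which is $O(m^c)$ (crudely), each on a connected surface with the same $c$ and with $H$-size at most $m+c$ (the $+c$ accommodating at most one extra pinning vertex per component of $\surf$, and the single-vertex $H$ for free components). The parameter $n$ for $G$ is unchanged since we only partition~$G$'s components.

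\medskip

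The main obstacle I anticipate is precisely this counting: a naive guess of the assignment $\sigma$ of free components to surface components is exponential in~$n$, and the lemma demands a bound polynomial in~$n$ (namely $O(m^c)$), so the proof must exploit that free components interact with neither $H$ nor each other in any obstructing way, so that their placement decouples completely and can be resolved type-by-type rather than component-by-component. Getting the reduction bookkeeping (the exact sizes $n$, $m+c$, $c$ of the produced \EEPCon{} instances, and checking the ``$O(m^c)$ instances'' count honestly) right is the delicate part; the topology itself is easy.
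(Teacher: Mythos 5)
Your reduction hinges on the claim that the connected components of~$G$ disjoint from~$H$ (the ``free'' components) decouple: that $G$ embeds extending~$\Pi$ if and only if the pinned part embeds and each free component individually embeds on \emph{some} component of~$\surf$. This is false, and the justification you give for it---``we may route a free component in an arbitrarily small disk in a face, [so] any number of free components can coexist on a single $\surf_i$ provided each individually embeds there''---only applies to \emph{planar} free components. A non-planar free component cannot be tucked into a disk; it must consume genus, and distinct non-planar components on the same surface compete for that genus. Concretely, take $\surf$ a single torus, $H$ a single vertex, and $G$ that vertex together with two disjoint copies of~$K_5$: each $K_5$ embeds on the torus, but their disjoint union does not (genus is additive over components), so your criterion answers yes while the correct answer is no. A second, independent problem: even a single non-planar free component must embed in a \emph{face} of~$\Pi$, not merely somewhere on~$\surf_i$; if $H$ traces a non-separating cycle on a torus, the complement is an annulus and $K_5$ does not fit there, although it embeds on the torus.

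The missing idea, which is how the paper proceeds, is that after discarding the \emph{planar} free components (which genuinely can be placed in a tiny disk of any face), the number of remaining free components is at most the total genus of~$\surf$, hence at most~$c$---otherwise the instance is trivially negative. One then picks one vertex per surviving free component, guesses the face of~$\Pi$ containing it ($O(m)$ choices each, so $O(m^c)$ guesses in total), adds these vertices to~$H$, and only \emph{then} splits each guessed instance into a conjunction of one \EEPCon{} per connected component of~$\surf$; the non-planar free components assigned to the same surface component remain together in a single joint instance, so their mutual interaction and their interaction with~$\Pi$ are handled by the downstream EEP rather than assumed away. Your instinct that the exponential $k^r$ guess must be avoided is right, but the correct way to avoid it is to bound $r$ by~$c$, not to decouple the components.
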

More precisely (and this is a fact that will be useful to prove that \Embed{} is in NP, see Theorem~\ref{T:np}), any instance of \EEPSurf{}$(n,m,c)$ is equivalent to the disjunction (OR) of $O(m^c)$ instances, each of them being the conjunction (AND) of $O(c)$ instances of \EEPCon$(n,m+O(c),c)$.

\begin{proof}[Sketch of proof]
  \ifarxiv See Appendix~\ref{A:cellularise} for details. \fi
  We can embed each connected component of~$G$ that is planar and disjoint from~$H$ anywhere.  There remains $O(c)$ connected components of~$G$ that are disjoint from~$H$.  For each of these, we choose a vertex, and we guess in which face of~$\Pi$ it belongs.  We then know which connected component of the surface each connected component of~$G$ is mapped into.
\end{proof}

\subsection{The induction}

The strategy for the proof of Proposition \ref{P:cell} is as follows.  For each EEP $(G',H',\Pi',\surf')$ from the previous lemma, we will extend~$H'$ to make it cellular, by adding either paths connecting two boundary components of a face of~$H'$, or paths with endpoints on the same boundary component of a face of~$H'$ in a way that the genus of the face decreases.  We first define an invariant that will allow to prove that this process terminates.

Let $\Pi$ be an embedding of a graph~$H$ on a surface~$\surf$.  The \emphdef{cellularity defect} of $(H,\Pi,\surf)$ is the non-negative integer
\[\emphdef{cd}(H,\Pi,\surf):= \underset{f\text{ face of }\Pi}{\sum} \text{genus}(f) + \underset{f\text{ face of }\Pi}{\sum}(\text{number of boundaries of }f-1).\]

Some obvious remarks: $\Pi$ can contain isolated vertices; by convention, each of them counts as a boundary component of the face of~$\Pi$ it lies in.  With this convention, every face of~$H$ has at least one boundary component, except in the very trivial case where $G$ is empty.  This implies that $\Pi$ is a cellular embedding if and only if $cd(H,\Pi,\surf)=0$.

The following lemma reduces an EEP to EEPs with a smaller cellularity defect.
\begin{lemma}\label{L:cdlowering}
  Any instance of \EEPCon$(n,O(n),c)$ reduces to $O(n^4)$ instances \\ $(G',H',\Pi',\surf)$ of \EEPCon$(n+O(1),O(n),c)$ where $cd(H',\Pi',\surf)<cd(H,\Pi,\surf)$.

  The reduction does not depend on the size of~$H$; furthermore, the new graph~$G'$ is obtained from the old one by adding exactly one edge and no vertex.
\end{lemma}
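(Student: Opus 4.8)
The plan is as follows. We start from an instance $(G,H,\Pi,\surf)$ of \EEPCon{} with $cd(H,\Pi,\surf)>0$, so some face $f$ of $\Pi$ is not a disk: either $f$ has at least two boundary components, or $f$ has positive genus (or both). Since $G$ embeds into $\surf$ extending $\Pi$ only if $G$ actually has some edges in the interior of such a non-disk face (otherwise $\Pi$ would have to be cellular already on that part, a case handled by the fact that $cd>0$ forces a non-disk face whose interior a solution must use — or we can argue directly that if no edge of $G$ ever enters $f$ then we may just cut $f$ down for free), the idea is to \emph{guess} the homotopy type, relative to $H$, of a single arc of the embedding inside $f$ that reduces the complexity of $f$. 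More precisely, I would proceed in two cases.

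First I would handle the case where $f$ has $\geq 2$ boundary components. Then I add one new edge $e$ to $G$ (and no vertex) joining a vertex $u$ on one boundary component of $f$ to a vertex $w$ on another; if the relevant boundary components carry no vertex of $H$ I first observe that a solution must route some edge of $G$ through $f$, pick one of its endpoints, and more carefully: the reduction ``does not depend on the size of $H$'' requirement suggests we are allowed to add the edge between two \emph{arbitrarily chosen but fixed} vertices among the $O(n)$ vertices already present, and then \emph{guess} which pair of boundary components of which face it connects and how it is routed. The number of faces and boundary components is $O(c)$ (bounded by the surface genus and the number of edges of $H$, but since the lemma allows $O(n^4)$ guesses we have ample room), and adding the arc $e$ through $f$ connecting two of its boundary components merges them into one, hence strictly decreases the second sum in $cd$ while leaving genus unchanged; the new instance $(G,H+e,\Pi+e,\surf)$ has $cd$ smaller by exactly $1$. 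We include, among the $O(n^4)$ guesses, the choice of the two endpoints of $e$ (already $O(n^2)$ possibilities if we do not fix them in advance) and the combinatorial routing of $e$ across $f$ — but note that once $e$ is drawn, its homotopy class in $f$ does not matter for $cd$, only the topological fact that it connects two distinct boundaries, so finitely (in fact $O(1)$ per face, hence $O(c)$ overall) choices of the ``which boundaries'' data suffice. Correctness: $G$ embeds extending $\Pi$ iff for \emph{some} such guess $G+e$ embeds extending $\Pi+e$, because any embedding $\Gamma$ of $G$ extending $\Pi$ restricted to $f$ must — since $\Gamma$ uses the interior of $f$, or else we could have cut $f$ — actually we should be careful; the cleaner statement is: if $\Gamma$ extends $\Pi$ and $f$ is a non-disk face then we may add to $H$ an arc inside $f$ disjoint from $\Gamma$ (since $\Gamma$ does not fill $f$) reducing $cd$, and this arc is what $e$ simulates; conversely any solution to the augmented instance restricts to a solution of the original. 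The subtlety of whether the arc can be chosen disjoint from $\Gamma(G)$ is resolved because $\Gamma(G)\cap f$ is a graph (1-complex) in the surface-with-non-disk $f$, and a 1-complex cannot separate all pairs of boundary components / cannot kill all the genus, so a suitable arc avoiding $\Gamma(G)$ exists.

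Second, the case where every face of $f$ has exactly one boundary component but some face $f$ has positive genus. Here I add an edge $e$ that is a loop based at a vertex $u$ on the (unique) boundary component of $f$, routed inside $f$ as a non-separating simple closed curve (together with a boundary arc); cutting $f$ along $e$ either lowers the genus by $1$ (if $e$ is a non-separating curve in $f$ on an orientable handle, or a non-orientable one-sided curve) or, in the worst case, splits $f$ into two pieces of total genus one less. In all sub-cases the first sum in $cd$ drops by at least $1$ and the second sum increases by at most as much as needed to keep things consistent — one must check the bookkeeping, but the point is that replacing a genus-$k$ one-holed face by a genus-$(k-1)$ face plus a disk, or by two faces whose genera sum to $k-1$, strictly decreases $cd$. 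Again we guess among $O(n^4)$ possibilities: the basepoint $u$ (among $O(n)$ vertices), and enough combinatorial data to specify which face, which handle, and the two-sided/one-sided type of the curve — this last is $O(c)$ choices. Correctness is as before: any embedding $\Gamma$ extending $\Pi$ leaves room inside the positive-genus face $f$ for a curve of the chosen type disjoint from $\Gamma(G)$ (a 1-complex in a positive-genus surface with boundary cannot be homotopically ``full''), and adding it gives a valid augmented solution; conversely augmented solutions restrict.

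The bookkeeping common to both cases: the new graph $G'=G+e$ has exactly one more edge and the same vertex set, as required; $H'=H+e$ and $\Pi'$ is $\Pi$ with the arc $e$ added; $\surf$ is unchanged; $H'$ still intersects every connected component of $G'$ since $H\subseteq H'$; and the surface still has $c$ simplices in its triangulation (we never retriangulate). The number of instances produced is $O(n^4)$, dominated by the choice of up to two endpoints ($O(n^2)$) times the $O(c)$-or-$O(n^2)$ combinatorial routing data — I will arrange the guessing so the product is $O(n^4)$; since the statement allows $O(n^4)$ this is comfortable. I would present the two cases as a single case analysis on ``why is $cd>0$'': pick any non-disk face $f$ and any ``simplification move'' on $f$ (merge two boundaries, or reduce genus), encode the move as adding one edge plus a finite guess, and verify $cd$ strictly decreases.

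The main obstacle I expect is the correctness direction ``$\Rightarrow$'': showing that an arbitrary embedding $\Gamma$ of $G$ extending $\Pi$ can be \emph{modified} (or rather, that $H$ can be enlarged by an arc/loop \emph{inside $\Gamma$'s complement}) so as to both reduce $cd$ and remain extendable to $\Gamma$. The key topological input is that the image of $G$ inside a non-disk face $f$ is a graph, and a graph embedded in a compact surface-with-boundary $f$ that is not a disk cannot meet every isotopy class of boundary-connecting arc and every genus-reducing curve — equivalently, $f \setminus \Gamma(G)$ still contains a face that is not a disk, and we can route $e$ there. Making this precise (and handling the degenerate situation where $\Gamma(G)\cap f$ is all of $\partial f$, forcing us to choose the basepoint $u$ of $e$ among the at most $O(n)$ vertices of $G$ on $\partial f$ rather than an arbitrary point) is the crux; everything else is Euler-characteristic bookkeeping.
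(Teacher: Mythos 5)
There is a genuine gap, and it sits exactly where you yourself locate ``the crux'': the forward correctness direction. Your resolution rests on the claim that, given a solution~$\Gamma$ of the original instance and a non-disk face~$f$ of~$\Pi$, one can find an essential arc (boundary-connecting, or genus-reducing) inside~$f$ that is disjoint from~$\Gamma(G)$, because ``a 1-complex cannot separate all pairs of boundary components / cannot kill all the genus.'' That claim is false: a graph can be cellularly embedded in a bordered surface that is not a disk, in which case \emph{every} face of $\Gamma(G)$ inside~$f$ is a disk and no essential arc avoids $\Gamma(G)$. A concrete and entirely generic example: $f$ is an annulus and $G$ contains a cycle that $\Gamma$ must embed around the core of~$f$; then $f\setminus\Gamma(G)$ has no arc joining the two boundary components. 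So for such instances none of your $O(n^4)$ augmented instances is positive even though the original one is, and the reduction is unsound. (A secondary, more minor gap: even when your arc exists, you still owe an argument that two different routings of~$e$ with the same endpoint data yield equivalent EEPs; ``the homotopy class does not matter for $cd$'' is not the same as ``the two augmented instances have the same answer.'')

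The paper's proof is built precisely to get around this obstruction: instead of an arc in the complement of~$\Gamma(G)$, it uses a path that runs mostly \emph{through} $G$ itself. One fixes a spanning forest~$F$ of $G-E(H)$ rooted at~$V(H)$, adds a single new edge $uv$, and considers the path $F(uv)=\overline{F(u)}\cdot uv\cdot F(v)$ between two roots in~$V(H)$. To show some such path is essential in~$f$, one takes an essential curve~$c$ in~$f$, perturbs it to meet $\Gamma$ only at vertices $v_1,\dots,v_k$, and applies the 3-path condition (Lemma~\ref{L:3paths}) to conclude that for some~$i$ the path $\overline{F(v_i)}\cdot c[i,i+1]\cdot F(v_{i+1})$ is essential; the sub-arc $c[i,i+1]$ is the only genuinely new piece and it avoids the interior of~$\Gamma(G)$. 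The new subgraph is then $H\cup F(uv)$ (so $H'$ grows by a whole tree path, while $G'$ still gains only one edge), and the equivalence of different routings is established by exhibiting a homeomorphism of~$f$ fixing $\partial f$ pointwise, after checking that the cut-open face has the same topology and the same boundary vertex orderings in both cases (with extra care for non-orientable faces and, in Case~2, for odd genus). Without some device of this kind --- routing the essential path through $G$ rather than around it --- your argument cannot be repaired.
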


Admitting Lemma \ref{L:cdlowering}, the proof of Proposition \ref{P:cell} is straightforward:

\begin{proof}[Proof of Proposition \ref{P:cell}]
We first apply Lemma \ref{L:connected}, obtaining $O(m^c)$ instances of \EEPCon$(n,m+c,c)$.  To each of these EEPs, we apply recursively Lemma~\ref{L:cdlowering} until we obtain cellular EEPs.  The cellularity defect of the initial instance $(G,H,\Pi,\surf)$ is $O(m+c)$, being at most the genus of $\surf$ plus $2m$, because each boundary component of a face of~$\Pi$ is incident to at least one edge of~$H$ (and each edge accounts for at most two boundary components in this way) or to one isolated vertex of~$H$.  Thus, the number of instances of \EEPCell{} at the bottom of the recursion tree is $(n+m+c)^{O(m+c)}$, in which the size of the graph is $O(n+m+c)$ and the surface has at most $c$ simplices.
\end{proof}

\subsection{Proof of Lemma~\ref{L:cdlowering}}

There remains to prove Lemma \ref{L:cdlowering}.  The proof uses some standard notions in surface topology, homotopy, and homology; we refer to textbooks and surveys~\cite{mt-gs-01,s-ctcgt-93,c-ctgs-17}.  We only consider homology with $\mathbb{Z}/2\mathbb{Z}$ coefficients.

Let $f$ be a surface with a single boundary component and let $p$ be a path with endpoints on the boundary of~$f$.  If we contract this boundary component to a single point, the path~$p$ becomes a loop, which can be null-homologous or non-null-homologous.  We employ the same adjectives null-homologous and non-null-homologous for the path~$p$.  Recall that, if $p$ is simple, it separates~$f$ if and only if it is null-homologous.  The reversal of a path~$p$ is denoted by~$\bar{p}$.  The concatenation of two paths $p$ and~$q$ is denoted by~$p\cdot q$.
  
\begin{lemma}\label{L:3paths}
  Let $f$ be a surface with boundary, let $a$ be a point in the interior of~$f$, and let $a_1$, $a_2$, and~$a_3$ be points on the boundary of~$f$.  For each~$i$, let $p_i$ be a path connecting $a_i$ to~$a$.  Let $r_1=p_2\cdot\bar p_3$, $r_2=p_3\cdot\bar p_1$, and $r_3=p_1\cdot\bar p_2$.
  Let $P$ be a possible property of the paths $r_i$, among the following ones:
  \begin{itemize}
      \item ``the endpoints of~$r_i$ lie on the same boundary component of~$f$'';
      \item ``$r_i$ is null-homologous'' (if $f$ has a single boundary component).
  \end{itemize}
  Then the following holds:  If both $r_1$ and~$r_2$ have property~$P$, then so does $r_3$.
\end{lemma}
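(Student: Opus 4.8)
The plan is to recast the statement inside the relative homology group $A := H_1(f,\partial f;\mathbb{Z}/2\mathbb{Z})$ and reduce it to a single linear identity. The first step is the key relation
\[ [r_1]+[r_2]+[r_3]=0 \qquad \text{in } A. \]
To prove it, observe that $r_1$ joins $a_2$ to $a_3$, that $r_2$ joins $a_3$ to $a_1$, and that $r_3$ joins $a_1$ to $a_2$; hence the concatenation $z:=r_1\cdot r_2\cdot r_3$ is a loop based at $a_2$. Writing it out as $(p_2\cdot\bar p_3)\cdot(p_3\cdot\bar p_1)\cdot(p_1\cdot\bar p_2)$ and cancelling the successive subpaths $\bar p_3\cdot p_3$, $\bar p_1\cdot p_1$ and $\bar p_2\cdot p_2$, we see that $z$ is null-homotopic in~$f$, hence null-homologous: $[z]=0$ in $H_1(f;\mathbb{Z}/2\mathbb{Z})$. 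On the other hand, each $r_i$ has both its endpoints on $\partial f$, so it is a relative $1$-cycle and defines a class $[r_i]\in A$; and by the standard relation between concatenation of paths and addition of $1$-chains, the image of $[z]$ under $H_1(f;\mathbb{Z}/2\mathbb{Z})\to A$ equals $[r_1]+[r_2]+[r_3]$. The relation follows.

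The second step is to express each of the two admissible properties $P$ as a condition ``$\ell([r_i])=0$'' for a suitable $\mathbb{Z}/2\mathbb{Z}$-linear map $\ell$ defined on $A$. If $P$ is the property that the endpoints of $r_i$ lie on the same boundary component of~$f$, take for $\ell$ the connecting homomorphism $\partial_*\colon A\to H_0(\partial f;\mathbb{Z}/2\mathbb{Z})$ of the pair $(f,\partial f)$: it sends $[r_i]$ to the sum of the classes of its two endpoints, which vanishes precisely when those endpoints lie in a common component of $\partial f$. If instead $P$ is the property that $r_i$ is null-homologous (so that $f$ has a single boundary component), then contracting $\partial f$ to a point gives an isomorphism $A\cong H_1(f/\partial f;\mathbb{Z}/2\mathbb{Z})$ carrying $[r_i]$ to the class of the loop into which $r_i$ is turned; the property is then literally ``$[r_i]=0$ in $A$'', i.e.\ $\ell=\mathrm{id}_A$.

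The conclusion is now immediate: $A$ is a $\mathbb{Z}/2\mathbb{Z}$-vector space, so the key relation gives $[r_3]=[r_1]+[r_2]$, and therefore $\ell([r_1])=\ell([r_2])=0$ implies $\ell([r_3])=0$; that is, if $r_1$ and $r_2$ have property~$P$, then so does $r_3$. (The first property can alternatively be settled in one line, without homology, since ``lying on a common boundary component'' is a transitive relation on $\{a_1,a_2,a_3\}$.) I do not expect a genuine obstacle here: the only points requiring care are the standard passage from concatenations of paths to sums of $1$-chains, and checking that each $r_i$ is a relative cycle so that $[r_i]\in A$ is well defined --- both routine.
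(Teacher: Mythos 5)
Your proof is correct and follows essentially the same route as the paper's: the paper likewise observes that concatenating the $r_i$ lets the subpaths $\bar p_j\cdot p_j$ cancel, so that (in your language) $[r_1]+[r_2]+[r_3]=0$, and then invokes the fact that homology over $\mathbb{Z}/2\mathbb{Z}$ is an algebraic (linear) condition. Your relative-homology packaging via $H_1(f,\partial f;\mathbb{Z}/2\mathbb{Z})$ and a linear functional $\ell$, including the connecting-homomorphism treatment of the first bullet (which the paper simply calls immediate, as you also note via transitivity), is just a more systematic formalization of the same argument.
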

\begin{proof}
  This is a variant on the
  \emph{3-path condition} from Mohar and Thomassen~\cite[Section~4.3]{mt-gs-01}.
  The first item is immediate.  The second one follows from the fact that homology is an algebraic condition: The concatenation of two null-homologous paths is null-homologous, and removing subpaths of the form $q\cdot\bar q$ from a path does not affect homology.
\end{proof}

\begin{proof}[Proof of Lemma~\ref{L:cdlowering}]
  Since $cd(H,\Pi,\surf)\ge1$, there must be a face~$f$ of~$H$ with either (1) several boundary components, or (2) a single boundary component but positive genus.  We will consider each of these cases separately, but first introduce some common terminology.
  
  Let $F$ be an arbitrary spanning forest of~$G-E(H)$ rooted at $V(H)$.  This means that $F$ is a subgraph of $G-E(H)$ that is a forest with vertex set $V(G)$ such that  each connected component of~$F$ contains exactly one vertex of~$V(H)$, its \emph{root}.  The algorithm starts by computing an arbitrary such forest~$F$ in linear time.
  
  For each vertex~$u$ of~$G$, let $r(u)$ be the unique root in the same connected component of~$F$ as~$u$, and let $F(u)$ be the unique path connecting $u$ to~$r(u)$. If $u$ and~$v$ are two vertices of~$G$, let $G_{uv}$ be the graph obtained from~$G$ by adding one edge, denoted $uv$, connecting $u$ and~$v$.  (This may be a parallel edge if $u$ and~$v$ were already adjacent in~$G$, but in such a situation when we talk about edge $uv$ we always mean the new edge.) Let $F(uv)$ be the unique path between $u$ and~$v$ in~$G$ that is the concatenation of $\overline{F(u)}$, edge $uv$, and~$F(v)$.  
  
  \noindent\textbf{Case~1: $f$ has several boundary components.}  

  Assume that $(G,H,\Pi,\surf)$ has a solution~$\Gamma$.  We claim that, \emph{for some vertices $u$ and~$v$ of~$G$, the embedding~$\Gamma$ extends to an embedding of~$G_{uv}$ in which the image of the path $F(uv)$ lies in~$f$ and connects two distinct boundary components of~$f$}. 
  
  Indeed, let $c$ be a curve drawn in~$f$ connecting two different boundary components of~$f$.  We can assume that it intersects the boundary of~$f$ exactly at its endpoints, at vertices of~$H$.  We can deform~$c$ so that it intersects~$\Gamma$ only at the images of vertices, and never in the relative interior of an edge.  We can, moreover, assume that $c$ is simple (except perhaps that its endpoints coincide on~$\surf$).    Let $v_1,\ldots,v_k$ be the vertices of~$G$ encountered by~$c$, in this order. We denote by $c[i,j]$ the part of~$c$ between vertices $v_i$ and~$v_j$. 
  We claim that, for some~$i$, we have that $\overline{F(v_i)}\cdot c[i,i+1]\cdot F(v_{i+1})$ connects two different boundary components of~$f$:  Otherwise, by induction on~$i$, applying the first case of Lemma~\ref{L:3paths}
  to the three paths $\overline{c[1,i]}$, $F(v_i)$, and $c[i,i+1]\cdot F(v_{i+1})$, we would have that, for each~$i$, $c[1,i]\cdot F(v_i)$ has its endpoints on the same boundary component of~$f$, which is a contradiction for $i=k$ (for which the curve is~$c$).  So let $i$ be such that $\overline{F(v_i)}\cdot c[i,i+1]\cdot F(v_{i+1})$ connects two different boundary components of~$f$; letting $u=v_i$ and~$v=v_{i+1}$, and embedding edge $uv$ as $c[i,i+1]$, gives the desired embedding of~$G_{uv}$.  This proves the claim.
  
  The strategy now is to guess the vertices $u$ and~$v$ and the way the path $F(uv)$ is drawn in~$f$, and to solve a set of EEPs $(G_{uv}, H\cup F(uv), \Pi',\surf)$ where $\Pi'$ is chosen as an appropriate extension of~$\Pi$.  \emph{Let us first assume that $f$ is orientable.}  One subtlety is that, given $u$ and~$v$, there can be several essentially different ways of embedding $F(uv)$ inside~$f$, if there is more than one occurrence of $r(u)$ and~$r(v)$ on the boundary of~$f$. So we reduce our EEP to the following set of EEPs:  For each choice of vertices $u$ and~$v$ of~$G$, and each occurrence of~$r(u)$ and~$r(v)$ on the boundary of~$f$, we consider the EEP $(G_{uv}, H\cup F(uv),\Pi',\surf)$ where $\Pi'$ extends $\Pi$ and maps $F(uv)$ to an arbitrary path in~$f$ connecting the chosen occurrences of $r(u)$ and~$r(v)$ on the boundary of~$f$.
  
  It is clear that, if one of these new EEPs has a solution, the original EEP has a solution.  Conversely, let us assume that the original EEP $(G,H,\Pi,\surf)$ has a solution; we now prove that one of these new EEPs has a solution.  By our claim above, for some choice of $u$ and~$v$, some EEP $(G_{uv},H\cup F(uv),\Pi'',\surf)$ has a solution, for some $\Pi''$ mapping $F(uv)$ inside~$f$ and connecting different boundary components of~$f$.  In that mapping, $F(uv)$ connects two occurrences of $r(u)$ and~$r(v)$ inside~$f$.  We prove that, for these choices of occurrences of $r(u)$ and~$r(v)$, the corresponding EEP described in the previous paragraph, $(G_{uv}, H\cup F(uv),\Pi',\surf)$, has a solution as well.  These two EEPs are the same except that the path $F(uv)$ may be drawn differently in $\Pi'$ and~$\Pi''$, although they connect the same occurrences of $r(u)$ and~$r(v)$ on the boundary of~$f$.  Under~$\Pi'$, the face~$f$ is transformed into a face~$f'$ that has the same genus and orientability character as~$f$, but one boundary component less.  The same holds, of course, for~$\Pi''$.  Moreover, the ordering of the vertices on the boundary components of the new face is the same in $\Pi'$ and~$\Pi''$.
  Thus, there is a homeomorphism~$h$ of~$f$ that keeps the boundary of~$f$ fixed pointwise and such that $h\circ\Pi''|_{F(uv)}=\Pi'|_{F(uv)}$.  This homeomorphism, extended to the identity outside~$f$, maps any solution of $(G_{uv},H\cup F(uv),\Pi'',\surf)$ to  a solution of $(G_{uv},H\cup F(uv),\Pi',\surf)$, as desired.
  
  It also follows from the previous paragraph that the cellularity defect decreases by one.
  To conclude this case, we note that the number of new EEPs is $O(n^4)$: indeed, there are $O(n)$ possibilities for the choice of $u$ (or~$v$), and $O(n)$ possibilities for the choice of the occurrence of $r(u)$ (or $r(v)$) on the boundary of~$f$.
  
  \emph{If $f$ is non-orientable}, the same argument works, except that there are two possibilities for the cyclic ordering of the vertices along the new boundary component of the new face: If we walk along one of the boundary components of~$f$ (in an arbitrary direction), use~$p$, and walk along the other boundary component of~$f$, we do not know in which direction this second boundary component is visited.  So we actually need to consider two EEPs for each choice of $u$, $v$, and occurrences of $r(u)$ and~$r(v)$, instead of one.  The rest  is unchanged.
  
  \noindent\textbf{Case~2: $f$ has a single boundary component and positive genus.}
  The proof is very similar to the previous case, the main difference being that, instead of paths in~$f$ connecting different boundary components of~$f$, we now consider paths in~$f$ that are non-null-homologous.
  \ifarxiv See Appendix~\ref{A:cellularise}.\fi
\end{proof}

\section{Solving a cellular EEP on a surface}\label{S:cellular_solving}
\begin{proposition}\label{P:bojan}
  There is a function $f:\NN\to\NN$ such that every instance of \EEPCell$(n,O(n),c)$ can be solved in time $f(c)\cdot O(n)$.
\end{proposition}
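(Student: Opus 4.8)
The plan is to reduce a cellular EEP on a surface to the setting handled by the core machinery in Mohar's linear-time embeddability algorithm~\cite{m-ltaeg-99}, whose engine decides, in time $f(c)\cdot O(n)$, whether a graph embeds in a fixed surface, and more specifically whether a given embedding of a subgraph extends. The first step is to recast the input $(G,H,\Pi,\surf)$ purely combinatorially: since $\Pi$ is cellular, we can cut $\surf$ along $H$ to obtain a collection of polygonal disks (the faces of $\Pi$), each with a boundary word in the edges and vertices of $H$; the number of these disks and the total boundary length are $O(n)$, and each disk inherits no genus. Extending $\Pi$ to an embedding of $G$ is then equivalent to embedding, simultaneously and compatibly, the "bridges" of $G$ with respect to $H$ — the connected components of $G-V(H)$ together with their attachment edges — into these disks, subject to the constraint that the attachments occur in the prescribed cyclic positions on the disk boundaries.

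The second step is to observe that this is exactly a disk-embedding / planarity-with-prescribed-boundary subproblem of the kind Mohar's algorithm already solves inside its recursion: given a planar piece (a disk) with marked boundary vertices in a fixed cyclic order, decide whether a set of bridges can be drawn inside it without crossings. Here the bridges attaching to a single face must be routed inside that face only (because $\Pi$ is fixed on $H$), so the problem decouples over the faces of $\Pi$ — one independent disk-embedding instance per face — except that a bridge may attach to several faces, i.e.\ to vertices of $H$ that appear on more than one face boundary, which forces a global consistency condition. I would handle this by the same guessing mechanism used throughout the paper and in~\cite{m-ltaeg-99}: since a vertex of $H$ has bounded degree in $\surf$ only if... — more robustly, since there are $O(c)$ faces, and a bridge attaching to $k$ distinct faces can be decomposed by guessing how it is partitioned among them, with the number of relevant "large" bridges bounded in terms of $c$; small planar bridges attaching within a single face contribute only a planarity test. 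The bookkeeping is chosen so that the number of guesses is bounded by $f(c)$ and each resulting instance is a disjoint union of disk-embedding problems solvable in linear time.

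The third step is simply to invoke the relevant lemma from~\cite{m-ltaeg-99}: its key component takes a graph, a subgraph, and a $2$-cell embedding of the subgraph in a bounded-genus surface, and decides extendability in time $f(c)\cdot O(n)$; our situation, after the combinatorial reduction above, is precisely an instance (or $f(c)$ instances) of that, since $\Pi$ is cellular by hypothesis. Summing the linear-time cost over the $f(c)$ guessed instances gives the claimed bound $f(c)\cdot O(n)$.

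I expect the main obstacle to be the second step: carefully matching the interface of Mohar's algorithm so that a \emph{cellular} EEP — not just the embeddability question it was designed for — is handled, in particular managing bridges that straddle several faces of $\Pi$ and ensuring the cyclic-order constraints on each face boundary are respected, all while keeping the number of auxiliary instances bounded by a function of $c$ alone and not of $n$. Everything else (cutting along $H$, the per-face decoupling, and the final time accounting) is routine.
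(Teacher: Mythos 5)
Your third step---handing the cellular instance $(G,H,\Pi,\surf)$ to the key component of Mohar's algorithm~\cite{m-ltaeg-99}---is exactly the paper's proof, and it needs no preparation: a cellular EEP is already the input format that machinery accepts. The only substance the paper adds is a pointer to how \cite{m-ltaeg-99} proceeds internally: it first enforces ``property~(E)'' (no local bridges with respect to a set $V_0$ containing all branch vertices of $H$ plus $O_c(1)$ further vertices), relying on~\cite{jmm-elb-97}, and then reduces to a number of ``simple'' extension problems bounded in terms of the genus, solved in \cite[Theorem~5.4]{m-ltaeg-99}. So if you delete your first two steps, what remains is the intended argument.

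The gap is in your second step, and it is a real one. A bridge of $G$ relative to $H$ has connected interior disjoint from $H$, so it cannot be ``partitioned among'' several faces; the only freedom is to assign it \emph{wholesale} to one of the faces on whose boundaries its attachment vertices all appear. The number of bridges admitting two or more admissible faces is not bounded by any function of $c$ (already with $H$ a single cycle on the sphere there can be $\Theta(n)$ bridges, each embeddable on either side), so enumerating face assignments costs $c^{\Theta(n)}$ guesses, not $f(c)$, and your claim that only $O_c(1)$ ``large'' bridges matter is unsupported---nothing in the cellularity hypothesis bounds the number of multiply-placeable or mutually conflicting bridges. Resolving these assignment choices together with the cyclic-order constraints inside each disk is precisely the combinatorial core of the embedding extension problem; it is what the property-(E) preprocessing and the reduction to simple extension problems in~\cite{m-ltaeg-99} are for, and it cannot be dispatched by bounded guessing followed by independent per-face disk embeddings. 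As written, your proof either silently falls back on invoking Mohar on the undecomposed instance (in which case steps one and two are vacuous) or relies on a decoupling that fails.
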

\begin{proof}
  This is essentially the main result of Mohar~\cite{m-ltaeg-99}.  The algorithm in~\cite{m-ltaeg-99} makes reductions to even more specific EEPs, and one feature that is needed for bounding the number of new instances is that the embedded subgraph~$H$ satisfies some connectivity assumptions.
  \ifarxiv Our Appendix~\ref{A:cellular_solving} provides more details.
  \fi
\end{proof}

\section{Proof of Theorems~\ref{T:np} and~\ref{T:main}}\label{S:main_proof}
We can finally prove our main theorems.
First, let us prove that we have an algorithm with complexity $f(c)\cdot n^{O(c)}$:
\begin{proof}[Proof of Theorem~\ref{T:main}]
  This immediately follows from Propositions~\ref{P:3-book}, \ref{P:pure}, \ref{P:surf}, \ref{P:cell}, and~\ref{P:bojan}.
  \ifarxiv See Appendix~\ref{A:main_proof} for details.\fi
\end{proof}

Finally, we prove that the \Embed{} problem is NP-complete:
\begin{proof}[Proof of Theorem~\ref{T:np}]
  \ifarxiv We give a detailed proof in Appendix~\ref{A:main_proof}.  
  \fi
  The problem \Embed{} is NP-hard because deciding whether an input graph embeds into an input surface is NP-hard~\cite{t-ggpnc-89}.  It is in NP because (assuming $\complex$ contains no 3-books),  
  a certificate that an instance is positive is given by a certificate that some instance of \EEPSurf{} is positive (see the proof of 
  Proposition~\ref{P:surf}).  Such an instance $(G,H,\Pi,\surf)$ has a certificate, given by the combinatorial map of a supergraph $G'$ of~$G$ cellularly embedded on~$\surf$ (see Section~\ref{S:cellularise}), that can be checked in polynomial time.
\end{proof}

\bibliographystyle{plain}
\bibliography{bib}

\ifarxiv
\appendix
\appendix

\section{Omitted proof from Section~\ref{S:impure}}\label{A:impure}

\begin{proof}[Proof of Lemma~\ref{L:reduc-pure}]
  Formally, we describe a set of transformations on $\complex$, $G$, and~$f$.  The invariant is that they preserve the existence or non-existence of an embedding of~$G$ into~$\complex$ respecting~$f$.
 
 \def\stp#1{\emph{Step~#1.\quad}}%
  \stp{1} We start by dissolving all the degree-two vertices of~$G$ that are not in the image of~$f$.  (We recall that \emph{dissolving a degree-two vertex} $v$, incident to edges $vv_1$ and~$vv_2$, means removing $v$ and replacing $vv_1$ and~$vv_2$ with a single edge $v_1v_2$.)  It is clear that the original graph~$G$ has an embedding on~$\complex$ respecting~$f$ if and only if the new graph (still called~$G$) has an embedding on~$\complex$ respecting~$f$.

  \smallskip

  \stp{2} If a singular node is not used, then we can remove it without affecting the embeddability of~$G$.  However, removing a vertex from a 2-complex does not yield a 2-complex.  We thus define a 2-complex that has the same properties.  Let $p$ be a singular node of a 2-complex~$\complex$. Let $T$ be the set of triangles and segments of~$\complex$ incident with~$p$, uniquely partitioned into $T_1,\ldots,T_k$, where each $T_i$ is either a cone, a corner, or an isolated segment.  The \emph{withdrawal of $p$ from $\complex$} is the complex obtained by doing the following operation for each $i=1,\ldots,k$:  We first create a new node~$p_i$, and then replace, in each triangle and edge of~$T_i$, the node~$p$ by~$p_i$.

  For every singular node~$p$ of~$\complex$ such that $f(p)=\varepsilon$ and incident to at least two segments, we withdraw~$p$ from~$\complex$.  For each created node $p_i$, we let $f(p_i)=\varepsilon$.  The fact that $G$ embeds, or not, on~$\complex$ respecting~$f$ is preserved: Indeed, if $G$ embeds on the original complex respecting~$f$, then this corresponds to an embedding on the complex obtained by withdrawing~$p$, and avoiding the $p_i$s, thus respecting~$f$; conversely, if $G$ embeds on the complex obtained by withdrawing~$p$, respecting~$f$, then it avoids the $p_i$s, and, after identifying together the $p_i$s to a single point~$p$, this embedding avoids~$p$, and thus respects~$f$.

  \smallskip
  
  \stp{3}  At this point, every singular node~$p$ with $f(p)=\varepsilon$ is incident to exactly one segment, and to no triangle.  For each isolated segment $pq$ of~$\complex$ such that $u:=f(p)$ and~$v:=f(q)$ are both different from~$\varepsilon$, and $G$ contains an edge $uv$ connecting $u$ and~$v$, we remove $uv$ from~$G$ and remove $pq$ from~$\complex$.  We need to prove that this operation does not affect the (non-)existence of an embedding of~$G$ respecting~$f$.  First, assume that, initially, there was an embedding~$\Gamma$ of~$G$ on~$\complex$ respecting~$f$; then either segment $pq$ is used by~$uv$, in which case clearly there is still an embedding after this operation, or edge $uv$ does not use segment~$pq$ at all, in which case we can first modify~$\Gamma$ by embedding edge~$uv$ on segment~$pq$ and by moving the edges on~$pq$ on the space where $uv$ was before, so we are now in the previous case.  Conversely, if after this operation $G$ has an embedding on~$\complex$ respecting~$f$, trivially it is also the case before.  
  
  \smallskip

  \stp{4}  For each isolated segment $pq$ of~$\complex$ such that $u:=f(p)$ and~$v:=f(q)$ are both different from~$\varepsilon$, but $G$ contains no edge connecting $u$ and~$v$, we do the following.  In $\complex$, we remove~$pq$ and add a new segment $p'q'$ where $p'$ and~$q'$ are new nodes; we also extend $f$ by letting $f(p')=f(q')=\varepsilon$.  Finally, if $G$ contains at least one edge of the form $ux$, where $x$ has degree one and is not in the image of~$f$, we remove a single such edge; similarly, if it contains at least one edge of the form $vy$, where $y$ has degree one and is not in the image of~$f$, we remove a single such edge.  This operation does not affect our invariant, for similar reasons; for example, if initially there was an embedding of~$G$ respecting~$f$, then segment~$uv$ can only contain edges of~$G$ that are themselves connected components of~$G$, which we can re-embed on the new segment~$p'q'$, or edges of the form $ux$ or $vy$, where $x$ and~$y$ have degree one and are not in the image of~$f$.
  
    \smallskip

  \stp{5} For each isolated segment $pq$ where $u:=f(p)$ is different from~$\varepsilon$ but $f(q)=\varepsilon$, we remove~$pq$ from~$\complex$ and add a new segment $p'q'$ where $p'$ and~$q'$ are new nodes; we also extend $f$ by letting $f(p')=f(q')=\varepsilon$.  Finally, if $G$ contains at least one edge of the form $ux$, where $x$ has degree one and not in the image of~$f$, we remove a single such edge.  The invariant is preserved, for reasons similar to the previous case.
  
    \smallskip

  \stp{6}  Now, every segment of the complex (still called~$\complex$) is incident to one or two triangles, except perhaps some segments that are themselves connected components of~$\complex$ and whose endpoints are not in the domain of~$f$.  If there is at least one such segment, we remove all of them from~$\complex$, and remove all edges $uv$ from~$G$ that are themselves connected components of~$G$ and such that $u$ and~$v$ are not in the image of~$f$; as above, this does not affect whether $G$ embeds into~$\complex$ respecting~$f$.
  
    \smallskip
  
  \stp{7}  Finally, for each node~$p$ of~$\complex$ incident to no segment, such that $u:=f(p)$ is different from~$\varepsilon$, we do the following:  If $u$ has degree zero, we remove~$p$ and~$u$; otherwise, we immediately return that $G$ does not embed into~$\complex$ respecting~$f$.  For each node~$p$ of~$\complex$ that is incident to no segment and such that $f(p)=\varepsilon$, we remove~$p$, and remove a single degree-zero vertex of~$G$ not in the image of~$f$, if one exists.
  
    \smallskip

  \emph{Conclusion.}\quad Now, $\complex$ has no node that is itself a connected component; each of its segments is incident to one or two triangles.   Also, $f$ maps each singular node of~$\complex$ to a vertex of~$G$.  It may map some other nodes of~$\complex$ to~$\varepsilon$, but such nodes are non-singular and, if an embedding uses them, a slight perturbation will avoid them, so we can remove the nodes~$p$ such that $f(p)=\varepsilon$ from the domain of~$f$ without affecting the result.  Now, we have an EEP as specified in the statement of Proposition~\ref{P:pure}.

  Finally, it is easy to check that, in each of the seven steps above, the numbers of vertices and edges of~$G$ do not increase, and the number of simplices of~$\complex$ increase by at most a multiplicative constant.  Moreover, the domain of~$f$ also does not increase (it increases when we withdraw singular nodes, but the images of the new nodes are~$\varepsilon$, and such nodes are later removed from the domain of~$f$).
\end{proof}

\section{Omitted proofs from Section~\ref{S:pure}}\label{A:pure}

\begin{figure}
\centering
\includegraphics[width=.6\linewidth]{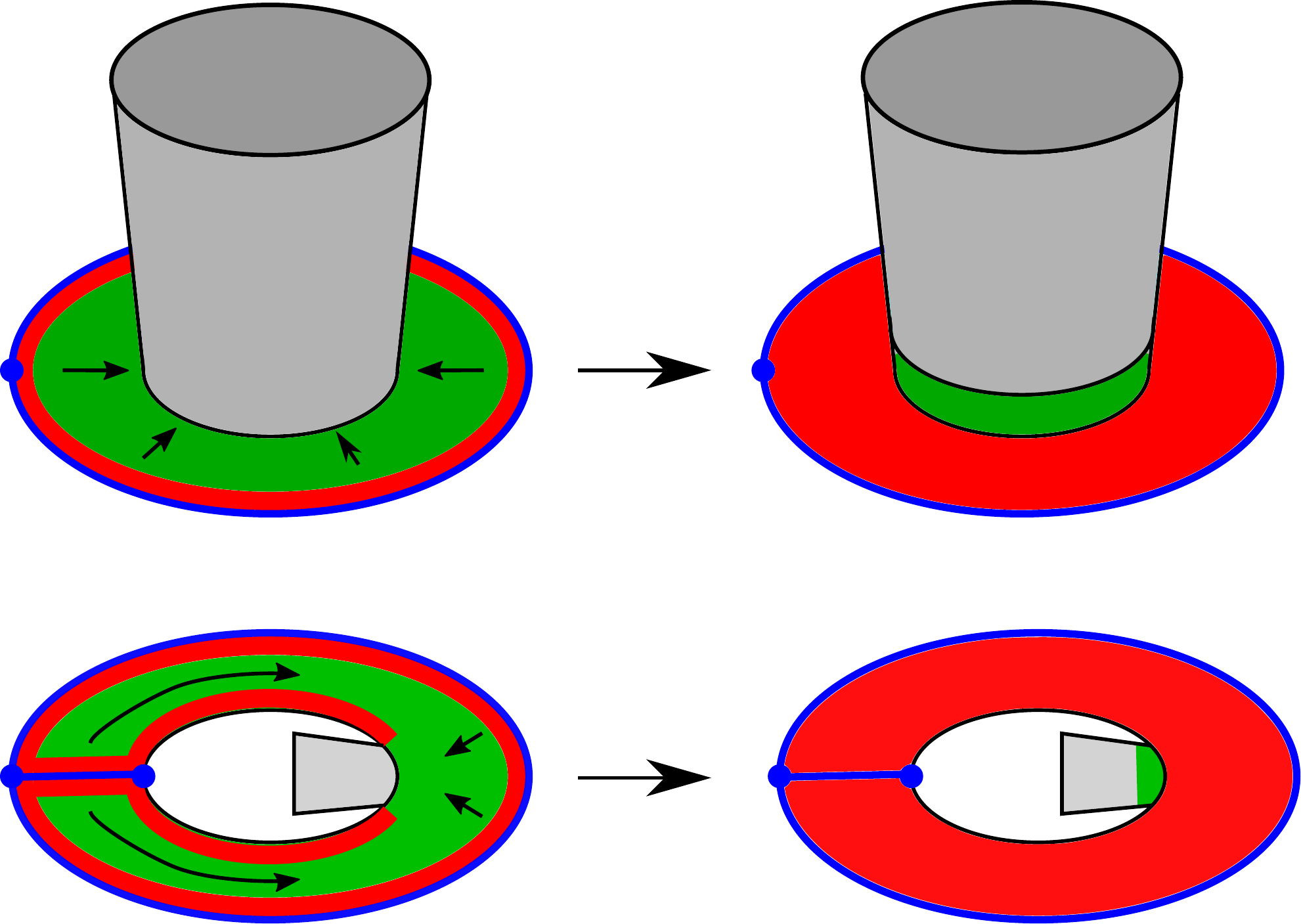}
\caption{In the proof of Lemma~\ref{L:surf-bound}, we push some parts of the graph outside~$S_p$ by an ambient isotopy of~$\surf$ that moves a small enough open disk or annulus (in blue), which is disjoint from $L_p$ and~$E_p$, to a larger part.}
\label{F:homeo}
\end{figure}
\begin{figure}
\centering
\includegraphics[width=.8\linewidth,height=4cm]{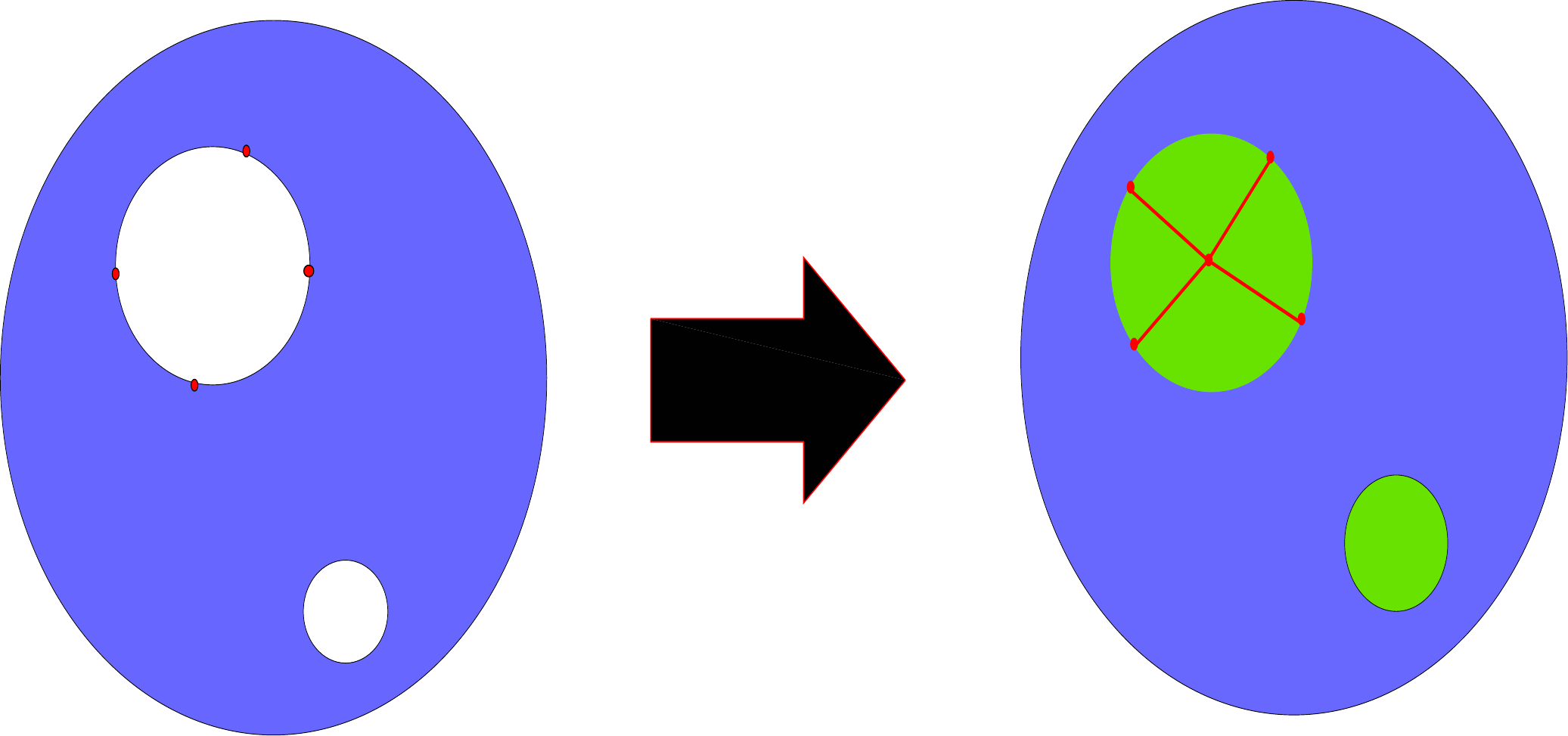}
\caption{The removal of boundary components in the proof of
 Lemma~\ref{L:surf nobound}.  We attach a disk to every boundary component
  of the surface.  Moreover, if there is a boundary component~$c$
  containing at least one vertex of~$H$, say $v_1,\ldots,v_k$, we add
  to~$H$ a new vertex~$v$, mapped inside the corresponding disk, and add,
  in $H$, one edge between $v$ and each of the~$v_i$s, that edge being
  mapped inside the disk.}
\label{F:remove-boundary}
\end{figure}

\begin{proof}[End of proof of Lemma~\ref{L:surf-bound}]
  There remains to prove that the two \EEP{}s constructed are equivalent.
 
  Let us first prove that any solution~$\Gamma$ of the \EEP{} instance $(G,H,\Pi,\complex)$ yields a solution of the \EEP{} instance $(G',H',\Pi',\surf)$.  Let $p$ be a singular node of~$\complex$, and let $v_p$ be the vertex of~$H$ mapped to~$p$.  We need to modify $\Gamma$ locally in the neighorhood~$N_p$ of~$p$ that is removed when transforming $\complex$ to~$\surf$. Without loss of generality, up to an ambient isotopy of~$\Gamma$ that does not move~$H$, we can assume that the image of~$\Gamma$ intersects~$N_p$ exactly in straight line segments having~$p$ as one endpoint.  To build a solution of $(G',H',\Pi',\surf)$, we first remove the part of~$\Gamma$ inside~$N_p$, and we reconnect~$v_p$ to each point of the image of~$\Gamma$ lying on the boundary of~$N_p$, by paths on~$S_p$; this is certainly possible because each point of~$S_p$ not in the image of $E_p\cup L_p$ can be connected to~$v_p$ by a path that does not meet $E_p\cup L_p$ (except at~$v_p$).  Thus, $(G',H',\Pi',\surf)$ has a solution.

\smallskip

  Conversely, let $\Gamma'$ be a solution of the \EEP{} $(G',H',\Pi',\surf)$; we build a solution of $(G,H,\Pi,\complex)$.  As above, let $q_p$ be a point of~$\surf$ that was obtained from a singular node~$p$ of~$\complex$. We now show that we can assume that $\Gamma$ does not enter~$S_p$, except for~$v_p$, $E_p$, $L_p$, and the edges in~$G'$ incident to~$v_p$:

  First, the part of~$\Gamma$ lying in the connected component~$D_p$ of~$S_p$ minus the image of~$L_p$ that is a disk corresponds to a planar subgraph of~$G$, connected to the rest of~$G$ by~$v_p$ only; we can re-embed this planar subgraph of~$G$ in~$S_p\setminus D_p$.  Next, consider a cone at~$p$.  The part of~$S_p$ enclosed by the corresponding loop of~$L_p$ is an annulus; the situation is as on Figure~\ref{F:homeo}, top, and, by an ambient isotopy of~$\surf$, we can push the part of~$\Gamma$ that lies in the annulus outside it, except for those edges touching~$v_p$.  Finally, consider a corner at~$p$, and the annulus that is the part of~$S_p$ enclosed by the corresponding loop in~$L_p$.  The local picture for this part of~$S_p$  is as shown on Figure~\ref{F:homeo}, bottom, and similarly by an ambient isotopy of~$\surf$ we can push the part of~$\Gamma$ on~$S_p$ outside it, except for those edges touching~$v_p$.

  Now, a solution of $(G,H,\Pi,\complex)$ can be obtained by the following procedure, for each of the singular nodes~$p$:  (1) Remove the sphere~$S_p$, together with the image of~$\Gamma$ inside~$S_p$; (2) add the neighborhood~$N_p$ of~$p$; (3) reconnect to~$p$ the points on the image of~$\Gamma$ that lie on the boundary of~$N_p$.
\end{proof}

\begin{proof}[End of proof of Lemma~\ref{L:surf nobound}]
  Clearly, any solution of $(G,H,\Pi,\surf)$ yields a solution of \\$(G',H',\Pi',\surf')$.  Conversely, let~$\Gamma'$ be a solution of $(G',H',\Pi',\surf')$; we build a solution of $(G,H,\Pi,\surf)$.  Let $b$ be a boundary of $\surf$.  There are two cases: 
  \begin{itemize}
  \item If no vertex of $H$ is mapped to~$b$, then $\Pi'$ maps~$H'$ outside the closure of~$D_b$, so, by an ambient isotopy of~$\Gamma'$ that does not move~$H'$, we can push~$\Gamma'$ outside~$D_b$. 
  \item Otherwise, the disk~$D_b$ is split into sectors by the edges incident to~$v_b$; the image of~$H'$ by~$\Pi'$ does not enter the interior of each sector, and intersects the boundary of a sector exactly along the image of~$v_b$ and of two edges incident to~$v_b$.  Thus, by an ambient isotopy that keeps the image of~$H'$ fixed, we can push the image of~$G'$ out of each sector.
  \end{itemize}
  After doing this for every boundary component~$b$, we obtain that the restriction of~$\Gamma'$ to~$G$ is a solution of $(G,H,\Pi,\surf)$.
\end{proof}

\section{Omitted proofs from Section~\ref{S:cellularise}}\label{A:cellularise}

  \begin{proof}[Proof of Lemma~\ref{L:connected}]
  We start by removing the connected components of~$G$ that are planar and disjoint from~$H$.  (Testing planarity takes linear time~\cite{ht-ept-74}.)
  This does not change the solution of the EEP, because such connected components can be embedded on an arbitrarily small planar portion of~$\surf$ (provided $\surf$ is non-empty, but otherwise the original \EEPSurf{} instance can be solved trivially).  So without loss of generality, every connected component of~$G$ disjoint from~$H$ is non-planar.  Let $V_0$ be an arbitrary set of vertices, one per non-planar connected component of~$G$ disjoint from~$H$.  Without loss of generality, the number of vertices in~$V_0$ is at most the genus of~$\surf$, which is at most $c$, because otherwise the initial EEP has no solution.
  
  Let $H':=H\cup V_0$; every connected component of~$G$ intersects~$H'$.  
  For each vertex of~$V_0$, we guess the face of~$\Pi$ it has to be embedded in, and extend $\Pi$ accordingly, by adding the images of~$V_0$ in~$\Pi$; let $\Pi'$ be the resulting embedding of~$H'$.   By the previous paragraph, the number of these guesses is at most $m^c$. It is clear that the initial EEP has a solution if and only if one of these EEPs $(G,H',\Pi',\surf)$ has a solution.
  
  These EEPs are almost of the form announced in the lemma, except that $\surf$ can be disconnected.  However, in any solution of this EEP, we know the connected component of~$\surf$ each connected component of~$G$ has to embed in, because each connected component of~$G$ intersects~$H$.  We can thus reformulate the EEP as the conjunction of several EEPs, one per connected component of~$\surf$.  (Of course, we can discard the connected components of~$\surf$ disjoint from~$H$.)
\end{proof}

\begin{proof}[End of proof of Lemma~\ref{L:cdlowering}]
   There remains to prove Case~2: $f$ has a single boundary component and positive genus.
  
  The proof is very similar to Case~1, the main difference being that, instead of paths in~$f$ connecting different boundary components of~$f$, we now consider paths in~$f$ that are non-null-homologous (if $f$ is orientable) or one-sided (if $f$ is non-orientable).  For brevity we call such paths \emph{essential}.  Observe that Lemma~\ref{L:3paths} is similarly valid for the property ``$r_i$ is one-sided'' (if $f$ has a single boundary component).
  
  Assume that $(G,H,\Pi,\surf)$ has a solution~$\Gamma$.  We claim that, \emph{for some vertices $u$ and~$v$ of~$G$, the embedding~$\Gamma$ extends to an embedding of~$G_{uv}$ in which the image of the path $F(uv)$ lies in~$f$ and is essential}.  The proof is similar in spirit to the corresponding claim in Case~1: We let $c$ be an essential curve in~$f$ intersecting the boundary of~$f$ exactly at its endpoints; we can assume similarly that it is simple and intersects only vertices of~$\Gamma$, in the order $v_1,\ldots,v_k$.  For some~$i$, $\overline{F(v_i)}\cdot c[i,i+1]\cdot F[v_{i+1}]$ must be essential, by induction and by Lemma~\ref{L:3paths}; this gives an embedding of~$G_{uv}$.
  
  \emph{If $f$ is orientable}, we reduce the original EEP to the following set of EEPs: For each choice of vertices $u$ and~$v$ of~$G$, and each occurrence of $r(u)$ and~$r(v)$ on the boundary of~$f$, we consider the EEP $(G_{uv}, H\cup F(uv),\Pi',\surf)$ where $\Pi'$ extends~$\Pi$ and maps $F(uv)$ to an arbitrary path~$p$ in~$f$ connecting the chosen occurrences of~$r(u)$ and~$r(v)$ on the boundary of~$f$ in such a way that $p$ is non-null-homologous.
  As before, the only subtlety is to prove that, if we have two EEPs $(G_{uv},H\cup F(uv),\Pi',\surf)$ and $(G_{uv},H\cup F(uv),\Pi'',\surf)$ such that $F(uv)$ are not embedded exactly in the same way in $\Pi'$ and~$\Pi''$, but are non-null-homologous in~$f$ and connect the same occurrences of~$r(u)$ and~$r(v)$ on the boundary of~$f$, then these EEPs are equivalent.  This follows from the fact that the image of $F(uv)$ in~$\Pi'$ cuts~$f$ into a face that is an orientable surface with two boundary components and with (Euler) genus that of~$f$ minus two (because $F(uv)$ is non-null-homologous in both embeddings, and thus non-separating in both cases).  Moreover, the ordering of the vertices along the boundary components of the new face is the same in both $\Pi'$ and~$\Pi''$.  Thus, as before, there is a homeomorphism~$h$ of~$f$ that keeps the boundary of~$f$ fixed pointwise and such that $h\circ\Pi''|_{F(uv)}=\Pi'|_{F(uv)}$, which similarly concludes.  It also follows that the cellularity defect decreases by two.  The number of these new EEPs is, similarly, $O(n^4)$.
  
  \emph{If $f$ is non-orientable and the genus of~$f$ is either one or even}, then by Euler's formula, a similar argument can be used: Regardless of the way we draw $F(uv)$ as a path $p$ connecting the chosen occurrences of $r(u)$ and~$r(v)$ on the boundary of~$f$ in such a way that $p$ is one-sided, cutting $f$ along~$p$ results in a surface whose topology is uniquely determined (it is a disk if the genus of~$f$ is one, and a non-orientable surface with genus that of~$f$ minus one, if the genus of~$f$ is even).  Moreover, the ordering of the vertices along the boundary of the new face is uniquely determined.  The cellularity defect decreases by one, and the same argument as above concludes.
 
  Finally, \emph{if $f$ is non-orientable and its genus is odd and at least three}, then cutting~$f$ along such a one-sided path~$p$ results in a surface in which the ordering of the vertices along the single boundary component is uniquely determined, but this surface, with genus that of~$f$ minus one, can be orientable or not.  Thus, for each choice of vertices $u$ and~$v$ of~$G$, and each occurrence of $r(u)$ and~$r(v)$ on the boundary of~$f$, we actually need to consider two EEPs, one in which $F(uv)$ is mapped to a path that cuts~$f$ into an orientable surface, and one in which $F(uv)$ is mapped to a path that cuts~$f$ into a non-orientable surface.  The rest of the argument is unchanged.
\end{proof}

\section{Omitted proof from Section~\ref{S:cellular_solving}}\label{A:cellular_solving}

When solving EEPs on surfaces, a useful property of the embedded subgraph is the following one.  We say that a subgraph~$H$ of~$G$ has \emph{property~(E)} if $H$ has no local bridges.  This means that every path $P$ of~$H$, all of whose vertices have degree at most two in~$H$, is an induced path in~$G$, and every connected component of $G-V(H)$ is adjacent to a vertex in $V(H)\setminus V(P)$.  In fact, what we need is a weaker version of property~(E) where we first prescribe a subset~$V_0$ of vertices of~$H$, where $V_0$ contains all vertices whose degree in~$H$ is different from two, and possibly a constant number of vertices whose degree in~$H$ is equal to two.  Then we say that $H$ has property~(E) with respect to~$V_0$ if the above property holds for every path~$P$ that is disjoint from~$V_0$.
  
\begin{proof}[Proof of Proposition~\ref{P:bojan}]
  This is essentially the main result from~\cite{m-ltaeg-99}.  The algorithm from~\cite{m-ltaeg-99} first reduces the problem to an instance $(G',H',\Pi',\surf)$ such that $H'$ satisfies property~(E) with respect to a subset $V_0$ that contains all vertices of~$H$ of degree different from two (all these are also vertices in~$H'$ of degree different from two) plus a constant number of vertices of degree two in~$H'$, where this constant number is bounded from above in terms of the genus of~$\surf$, which is itself bounded from above by~$c$.  For this purpose, \cite{m-ltaeg-99} relies on another paper~\cite{jmm-elb-97}.  After achieving this property, the paper~\cite{m-ltaeg-99} reduces the EEP to a constant number (where the constant depends on~$c$) of ``simple'' extension problems \cite[Section~4]{m-ltaeg-99}, which are then solved in~\cite[Theorem~5.4]{m-ltaeg-99}.
\end{proof}

\section{Omitted proofs from Section~\ref{S:main_proof}}\label{A:main_proof}

\begin{proof}[End of proof of Theorem~\ref{T:main}]
  Consider an instance of~\Embed$(n,c)$.  
  Proposition~\ref{P:3-book} allows to discard the 2-complexes containing a 3-book.  Proposition~\ref{P:pure} reduces the problem to $(cn)^{O(c)}$ instances of \EEPSing$(cn,c,O(c))$.  Proposition~\ref{P:surf} reduces each such instance into an instance of \EEPSurf$(O(cn),O(c),O(c))$.  Proposition~\ref{P:cell} reduces that instance into $O(cn)^{O(c)}$ instances of \EEPCell$(O(cn),O(cn),O(c))$.  Proposition~\ref{P:bojan} shows that each such instance can be solved in time $f(O(c))\cdot O(cn)$.
  \end{proof}
  
\begin{proof}[Proof of Theorem~\ref{T:np}]
  Let us first prove that the problem is NP-hard.  The following problem \GraphGenus{} is NP-hard: Given a graph~$G$ and an integer~$g$, decide whether $G$ embeds on the orientable surface of genus~$g$~\cite{t-ggpnc-89}.  This almost immediately implies that \Embed{} is NP-hard; the only subtlety is that in \GraphGenus{}, $g$ is specified in binary, thus more compactly than a triangulated surface of genus~$g$ (and thus $\Omega(g)$ triangles).  To be very precise, given an instance $(G,g)$ of \GraphGenus{}, we transform it in polynomial time into an equivalent instance of \Embed{} as follows:  If $G$ has at most $g$ edges, then we transform it into a constant-size positive instance of~\Embed{} (every graph with $g$ edges embeds on the orientable surface of genus~$g$); otherwise, we consider the instance $(G,\complex)$ where $\complex$ is a 2-complex that is an orientable surface of genus~$g$; since $G$ has at least $g$ edges, the transformation takes polynomial time in the size of $(G,g)$.

  We now prove that the problem \Embed{} belongs to NP.  The case where $\complex$ contains a 3-book is trivial; let us assume that it is not the case.  The proof of Proposition~\ref{P:surf} shows that an \Embed{} instance is positive if and only if at least one instance of~\EEPSurf{}, among $(cn)^{O(c)}$ of them, is positive.  The certificate indicates which of these instances is positive (this requires a polynomial number of bits), together with a certificate that this instance is indeed positive (see below).  To check this certificate, the algorithm builds the corresponding instance of~\EEPSurf{} (as done in Section~\ref{S:pure}---this takes polynomial time) and checks the certificate.
  
  Here is a way to provide a certificate for an instance of \EEPSurf.  In Section~\ref{S:cellularise}, we have proved that, if we have an instance $(G,H,\Pi,\surf)$ of~\EEPSurf, then there exists a cellular embedding~$\Gamma'$ (in the form of a combinatorial map) of a graph~$G'$ containing~$G$, and such that $\Gamma'$ extends~$\Pi$.  Moreover, $G'$ is obtained from~$G$ by adding a number of edges that is $O(c)$, where $c$ is the size of the original complex.  (Recall that in the instance of \EEPSurf, the size of~$H$ is $O(c)$.)  The cellular embedding~$\Gamma'$ of~$G'$, given as a combinatorial map, is the certificate that $(G,H,\Pi,\surf)$ is positive:  Given $(G,H,\Pi,\surf)$ and this certificate, we can in polynomial time check that $G'$ contains~$G$, that the restriction of~$\Gamma'$ to~$H$ is indeed~$\Pi$, and that the combinatorial map of~$\Gamma'$ is indeed an embedding on~$\surf$.
\end{proof}

\fi
\end{document}